\documentclass[aps,pra,reprint,amsmath,amssymb,superscriptaddress]{revtex4-2}
\usepackage[T1]{fontenc}
\usepackage{lmodern}
\usepackage{amsthm,booktabs}
\usepackage{microtype}
\usepackage[hidelinks]{hyperref}

\newtheoremstyle{aps}{\topsep}{\topsep}{\normalfont}{\parindent} {\itshape}{.}{ }{}
\theoremstyle{aps}
\newtheorem{theorem}{Theorem}
\newtheorem{proposition}{Proposition}
\newtheorem{lemma}{Lemma}
\newtheorem{corollary}{Corollary}
\theoremstyle{remark}

\DeclareMathOperator{\Var}{Var}
\DeclareMathOperator{\Cov}{Cov}
\DeclareMathOperator{\tr}{tr}
\DeclareMathOperator{\clip}{clip}
\newcommand{\PP}{\mathbb P}
\newcommand{\E}{\mathbb E}
\newcommand{\R}{\mathbb R}
\newcommand{\ket}[1]{\lvert #1\rangle}
\newcommand{\bra}[1]{\langle #1\rvert}
\newcommand{\ip}[2]{\langle #1,#2\rangle}
\newcommand{\norm}[1]{\lVert #1\rVert}
\newcommand{\abs}[1]{\lvert #1\rvert}
\newcommand{\Nzero}{\mathcal N_0}
\newcommand{\one}{\boldsymbol 1}

\begin{document}

\title{Quantum Readout Complexity for Normalized Linear Functionals}
\author{Roi Vence-Nogueira}
\affiliation{School of Mathematical Sciences, Peking University}
\date{October 2026}

\begin{abstract}
Quantum numerical algorithms often encode a positive solution $q$ in the amplitudes of a quantum state, while the required scientific output is a normalized linear functional of $q$. Basis measurements produce probabilities determined by squared solution values, while the functional uses the solution values themselves. We study the complexity of this output-interface problem for $F_\phi(q)=\int\phi q\,\mathrm d\nu/\int q\,\mathrm d\nu$ and identify a target-dependent susceptibility $\chi_\phi$ that governs the amount of copies of the solution state needed. For a fixed positive reference state and a neighborhood around it, the optimal number of unknown-state copies is \(\Theta\!\left(\chi_0\varepsilon^{-2}\log(1/\beta)\right)\), where \(\varepsilon\) is the absolute error tolerance and \(\beta\) is the allowed failure probability. We show that two fixed single-copy measurements attain the same rate. We then analyze how this baseline changes when the interface retains classical information, uses adaptive allocation, supplies coherent access, or restricts the allowed solution family. Therefore $\chi_\phi$ quantifies a central cost of using a solution encoded in an $L^2$-normalized quantum state to estimate an $L^1$-normalized linear observable.
\end{abstract}
\maketitle

\section{Introduction}
\label{sec:intro}

Quantum algorithms, such as those for linear systems and differential equations, often prepare a state whose amplitudes encode the solution~\cite{hhl,montanaro}. Recovering the full solution can be costly, so the goal is often a particular quantity instead. This quantity might be a spatial average, an expected payoff, or the fraction of a physical quantity in a certain region. Obtaining such an answer requires a suitable readout procedure. We study the problem of readout from state copies alone, provide a local copy complexity, and then analyze how the complexity changes when additional classical information or coherent access to the preparation is provided.

For a solution with positive values $q_i$ on a grid with quadrature weights $\nu_i>0$, the amplitude encoding is
\begin{equation}\label{eq:finite-encoding}
 \ket{x(q)}=\frac{\sum_i\sqrt{\nu_i}q_i\ket{i}}
                      {\sqrt{\sum_i\nu_iq_i^2}}.
\end{equation}
The quantities considered are normalized linear functionals,
\begin{equation}\label{eq:opening-general-target}
 F_\phi(q)=\frac{\sum_i\nu_i\phi_iq_i}{\sum_i\nu_iq_i},
 \qquad \abs{\phi_i}\leq1.
\end{equation}
where $\phi$ specifies the weighting. The state is normalized in $L^2$ norm, while the target is normalized by the $L^1$ mass of $q$. This mismatch and its consequences are the object of our study.

To illustrate this problem, consider estimating a forward probability density. Let $q_i\approx p(x_i,T)$, where $p(\cdot,T)$ is the density of a certain random variable $X_T$. Setting $\phi_i=\one_A(x_i)$ gives the probability of a region $A$:
\begin{equation}\label{eq:opening-target}
 \PP\left[X_T\in A\right]\approx
 F_{\one_A}(q)=\frac{\sum_i\nu_iq_i\one_A(x_i)}{\sum_i\nu_iq_i}.
\end{equation}
To instead compute an expectation, we could replace the indicator by a payoff.

In all of these cases, basis measurement of~\eqref{eq:finite-encoding} sample from $Q_i\propto\nu_iq_i^2$, but the target average has $P_i\propto\nu_iq_i$. If the solution was encoded with amplitudes $\sqrt{P_i}$, this would allow direct sampling from $P$. However, when the solution itself is encoded in the amplitudes, recovering the $P$-weighted average requires a different readout procedure. We ask how many copies are generally necessary, and whether a fixed measurement procedure can attain that cost? We answer both questions on a local neighborhood for every bounded real $\phi$, and then analyze these results under several enrichments of the information of the output interface. These results apply, in particular, to weighted grids and continuous fields on finite-measure spaces, no matter how the solution was produced.

\subsection{What information reaches the readout stage?}
In the copy model, the readout procedure consists of identical copies of the normalized states and known grid weights and observable $\phi$. Joint quantum operations, known reference states, and classical computation are all allowed, and we only count unknown-state copies. The measurements depend on known data, and neither the classical input data nor a coherent preparation unitary are supplied, although these restrictions are relaxed in subsequent results.

A similar problem arises in composable quantum subroutines. Modular quantum computation requires a producer--client interface with agreement about which concrete data and operations are exchanged~\cite{modular-qc}. The client doesn't have access to how the state was produced, and the readout interface determines which downstream tasks are possible.

\subsection{A condition number for the requested output}
Different amplitude states can be close to each other while producing significantly different physical averages for a certain $\phi$. We show that readout is difficult when the averages changes rapidly in directions that change the state only slightly. For $\mu=\E_{P}[\phi]$, we quantify this sensitivity by
\begin{equation}\label{eq:intro-closed}
 \chi_\phi(q)=
 \frac{\sum_i\nu_iq_i^2}{(\sum_i\nu_iq_i)^2}
       \sum_i\nu_i(\phi_i-\mu)^2.
\end{equation}
Section~\ref{sec:model} establishes this interpretation of the formula.

For $W=\sum_i\nu_i$, let $U_i=\nu_i/W$ and $\kappa=(\sum_i\nu_iq_i)/\sqrt{W\sum_i\nu_iq_i^2}$. Then we establish the decomposition:
\begin{equation}\label{eq:intro-factorization}
 \chi=\underbrace{\kappa^{-2}}_{\text{representation concentration}}
       \underbrace{\E_{U}\left[(\phi-\mu)^2\right]}_{\text{target sensitivity}}.
\end{equation}
Which shows that both concentration and sensitivity affect the readout difficulty.

\subsection{What the main theorem establishes}
For a positive reference solution $q_0$ consider the set of inputs whose normalized states lie within distance $\kappa_0/4$ from $x_0=q_0/\norm{q_0}_2$. The state $x_0$ is the center of the allowed neighborhood $\Nzero$. For a fixed center $q_0$, let $K^*(\varepsilon,\beta;\Nzero)$ be the smallest deterministic copy budget that a procedure needs to estimate the target with error at most $\varepsilon$, with failure probability at most $\beta$, for every input in $\Nzero$. For $0<\varepsilon\leq\kappa_0\sqrt{\chi_0}/12$, Theorem~\ref{thm:main} shows that:
\[
 K^*(\varepsilon,\beta;\Nzero)
 =\Theta\!\left(\chi_0\varepsilon^{-2}\log(1/\beta)\right),
 \qquad 0<\beta\leq1/3.
\]
Our lower bound covers all estimators, including biased ones, and all measurements on the available copies. The upper bound uses two fixed settings determined by $\nu$ and $\phi$, where one is used to estimate a numerator and the other its normalization. The settings don't require the reference solution nor the reconstruction of the unknown $q$, and the reference is only needed to predetermine the budget. We also relax this requirement and show the concrete cost without this predetermination. Therefore our results characterize the state-only interface, showing that $\chi_0$ is the corresponding finite-sample copy condition number.

Corollary~\ref{cor:class-lower} extends the lower bound to any output class containing such a neighborhood, using the largest susceptibility among all the admissible centers. Proposition~\ref{prop:basis-gap} shows that at fixed accuracy, basis sampling can require $\Omega(\sqrt D)$ copies, while our two measurements require only a constant number.

Theorem~\ref{thm:main} and Proposition~\ref{prop:basis-gap} are the main state-only results. Sections~\ref{sec:adaptive}--\ref{sec:queries} analyze the complexity under adaptive allocation, grid refinement, retained classical information, and retained coherent preparation. We then analyze the concrete consequences for numerical algorithms. Appendix~\ref{app:main-proof} provides the proofs for all the results, and Appendix~\ref{app:supporting} extends the main results to signed and complex states, and analyzes several numerical applications.

\subsection{Relation to prior work}
Quantum PDE and linear-system algorithms have long distinguished preparing a normalized solution state from estimating a classical quantity of interest~\cite{hhl,montanaro,linden,deiml}. Coherent readout and direct-observable PDE algorithms~\cite{li-elliptic,li-parabolic} use resources beyond the normalized-state copies. Our work studies the optimal cost when those copies are the only information delivered to the readout procedure.

Linear-combination-of-unitaries (LCU) methods also address observable extraction. Chakraborty's single-ancilla protocol~\cite{chakraborty-lcu} uses the terms of an operator known decomposition to estimate normalized quadratic observables by repeated measurements and classical averaging. Our minimax bounds concern readout from copies of the resulting state.

The importance-weighted expression for $\chi$ coincides with the asymptotic variance of self-normalized importance sampling~\cite{geweke,owen}, although in our work the weights depend on the unknown solution and are not available to the estimator. Quantum semiparametric and pure-state theory provide derivative bounds and pointwise efficiency results~\cite{tsang,matsumoto}. Lahiry--Nussbaum~\cite{lahiry} establish asymptotic minimax squared risk for functionals linear in the density operator on shrinking neighborhoods. For finite samples, Meyer et al.~\cite{meyer} framework studies arbitrary state properties, with bounds on the number of copies through R\'enyi divergence between states.

Chen--Gong--Zhou~\cite{chen} give a finite-sample Fisher-information characterization for estimation of known observable families $\tr(O_i\rho)$. Their local estimator does not cover the present pure-state ratio experiment.

Strong measurements can recover weak values~\cite{cohen-pollak}. Other overlap-based readout protocols include controlled-SWAP, circuits based on Bargmann invariants, and fidelity estimation~\cite{chiribella,wagner,wang-zhang}. Our coherent-access procedure uses standard amplitude estimation~\cite{bhmt,alase}.

Our contribution is their explicit finite-sample resolution for this normalized-linear positive-amplitude interface: the positive-part construction produces separated targets with divergence $O(\varepsilon^2/\chi_0)$ inside a fixed neighborhood without a minimum-density assumption, while a centered variance identity lets two fixed settings attain the same scale uniformly. Together these give an optimal minimax copy complexity for biased estimators and arbitrary collective measurements.

\section{The experiment and its condition number}\label{sec:model}

We work on $(\Omega,\nu)$ with $0<W=\nu(\Omega)<\infty$ and separable $L^2(\nu)$, allowing us to treat weighted grids and continuous fields within the same framework. We assume an almost everywhere positive $q\in L^2(\nu)$, and a known real-valued measurable $\phi$ satisfying $\abs{\phi}\leq1$. Define
\begin{equation}\label{eq:model}
 \begin{aligned}
 Z&=\int q\,\mathrm d\nu,& S&=\int q^2\,\mathrm d\nu,\\
 \mathrm dP&=\frac qZ\,\mathrm d\nu,& \mathrm dQ&=\frac{q^2}S\,\mathrm d\nu,\\
 x_q&=\frac q{\sqrt S},&
 \mu&=F_\phi(q)=\int\phi\,\mathrm dP.
 \end{aligned}
\end{equation}
Given that $\nu(\Omega)<\infty$, we have $0<Z<\infty$. The quantum system is the complexification of $L^2(\nu)$, with pure density operator $\ket{x_q}\bra{x_q}$. We choose the state vector with positive real amplitudes. For a finite set with atomic weights $\nu_i$, the orthonormal-coordinate vector is exactly~\eqref{eq:finite-encoding}. Both state and target are invariant under positive rescaling of $q$.

We assume reception of independent copies of this state. All POVMs, joint or adaptive processing, and classical computation are permitted. We count only unknown-state copies, while the processing of known measurements and reference states not included in the complexity count. A deterministic copy bound gives a maximum on every possible execution. Our main result doesn't use information from a classical description of $q$, samples from $P$, or coherent access to its preparation.

Let $U=\nu/W$, let $v=\one/\sqrt W$ be the known unit reference vector, and let $\Phi$ be multiplication by $\phi$. Then
\begin{equation}\label{eq:ratio}
 \kappa=\ip{v}{x_q}=\frac Z{\sqrt{WS}}>0,
 \qquad F_\phi(q)=\frac{\ip{v}{\Phi x_q}}{\ip{v}{x_q}}.
\end{equation}
For a normalized state $x$ in the positive neighborhood, write
\begin{equation}
 \mathcal F(x):=\frac{\ip{v}{\Phi x}}{\ip{v}{x}},
 \qquad \mathcal F(x_q)=F_\phi(q).
\end{equation}
This identity gives bounded finite-rank readout observables even when the ambient Hilbert space is infinite-dimensional.

To quantify the local difficulty of measurement, perturb $q$ multiplicatively as $q_t=q(1+th)$, for real $h$. The physical average changes at speed $\Cov_P(\phi,h)$, while the speed of the normalized state is $\sqrt{\Var_Q(h)}$. We measure sensitivity by the ratio of squared target speed to squared state speed and take its supremum:
\begin{equation}\label{eq:intro-chi}
 \begin{aligned}
 \chi_\phi(q)&=\sup_h\frac{\Cov_P(\phi,h)^2}{\Var_Q(h)}\\
 &=\E_Q\!\left[\left(\frac{\mathrm dP}{\mathrm dQ}\right)^2
       (\phi-\mu)^2\right].
 \end{aligned}
\end{equation}
This is the variance functional of self-normalized importance sampling~\cite{geweke,owen}.

\begin{proposition}[Explicit susceptibility]\label{prop:chi}
Put $w=(S/(Zq))(\phi-\mu)$. The supremum in~\eqref{eq:intro-chi}, taken over bounded real measurable $h$ with $\Var_Q(h)>0$, is
\begin{equation}\label{eq:chi}
 \begin{aligned}
 \chi=\E_Q\!\left[w^2\right]
 &=\frac S{Z^2}\int(\phi-\mu)^2\,\mathrm d\nu\\
 &=\kappa^{-2}\left(\Var_U(\phi)+
       \left(\E_U[\phi]-\mu\right)^2\right).
 \end{aligned}
\end{equation}
The empty supremum is zero. Moreover $\chi=0$ if and only if $\phi$ is constant almost everywhere, in which case zero copies suffice. For every bounded $h$, along $q_t=q(1+th)$,
\begin{equation}\label{eq:derivatives}
 \begin{aligned}
 \dot F(0)&=\Cov_P(\phi,h),&
 \dot x(0)&=x_q\left(h-\E_Q[h]\right),\\
 \norm{\dot x(0)}_2^2&=\Var_Q(h).
 \end{aligned}
\end{equation}
\end{proposition}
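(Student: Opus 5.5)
The plan is to compute the derivatives in \eqref{eq:derivatives} first, then identify the supremum by Cauchy--Schwarz in $L^2(Q)$, and finally rewrite the result in the two closed forms.

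\textbf{Derivatives.} For bounded $h$ and small $\abs t$, $q_t=q(1+th)$ stays a.e.\ positive and in $L^2(\nu)$. Writing $Z_t=Z(1+t\E_P[h])$ and $\int\phi q_t\,\mathrm d\nu=Z(\mu+t\E_P[\phi h])$ gives
\[
F(t)=\frac{\mu+t\E_P[\phi h]}{1+t\E_P[h]},
\]
so $\dot F(0)=\E_P[\phi h]-\mu\E_P[h]=\Cov_P(\phi,h)$.

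Similarly, $S_t=S(1+2t\E_Q[h]+t^2\E_Q[h^2])$, so $S_t^{-1/2}=S^{-1/2}(1-t\E_Q[h]+O(t^2))$. Then $x_t=q_t/\sqrt{S_t}$ has derivative $\dot x(0)=x_q(h-\E_Q[h])$. Since $x_q^2\,\mathrm d\nu=\mathrm dQ$ and $h$ is bounded, the remainders are uniformly $O(t^2)$ in $L^2$. Hence $\norm{\dot x(0)}_2^2=\int(h-\E_Qh)^2\,\mathrm dQ=\Var_Q(h)$.

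\textbf{The supremum.} Since $\mathrm dP/\mathrm dQ=S/(Zq)$, we have $\Cov_P(\phi,h)=\E_P[(\phi-\mu)h]=\E_Q[wh]$. Also $\E_Q[w]=\E_P[\phi-\mu]=0$, so $\E_Q[wh]=\E_Q[w(h-\E_Qh)]$.

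I first check $w\in L^2(Q)$: $\E_Q[w^2]=\int (S^2/(Z^2q^2))(\phi-\mu)^2(q^2/S)\,\mathrm d\nu=(S/Z^2)\int(\phi-\mu)^2\,\mathrm d\nu<\infty$, because $\nu$ is finite and $\phi$ is bounded. This gives the first equality in \eqref{eq:chi}.

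Cauchy--Schwarz then yields $\Cov_P(\phi,h)^2\le\E_Q[w^2]\Var_Q(h)$, which is the upper bound.

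\textbf{Attainment (the main obstacle).} The maximizer $h=w$ is not bounded, since $1/q$ may blow up. I would use the truncations $h_n=\clip(w,-n,n)$, which are bounded and converge to $w$ in $L^2(Q)$ by dominated convergence. Then $\E_Q[wh_n]\to\E_Q[w^2]$ and $\Var_Q(h_n)\to\Var_Q(w)=\E_Q[w^2]$, so the ratio tends to $\E_Q[w^2]$ whenever $\E_Q[w^2]>0$.

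\textbf{The case $\chi=0$.} If $\E_Q[w^2]=0$, then $\phi=\mu$ a.e., because $q>0$ a.e. In that case every covariance vanishes, so the supremum is $0$; the convention that the empty supremum is zero covers a degenerate $Q$.

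Conversely, if $\phi$ is a.e.\ constant, then $F_\phi\equiv$ that constant. The output is known, so zero copies suffice. If $\phi$ is not a.e.\ constant, then $(\phi-\mu)^2>0$ on a set of positive $\nu$-measure and $\chi>0$.

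\textbf{Last closed form.} Divide by $W$: $\int(\phi-\mu)^2\,\mathrm d\nu=W\,\E_U[(\phi-\mu)^2]=W\bigl(\Var_U(\phi)+(\E_U\phi-\mu)^2\bigr)$. Since $S W/Z^2=\kappa^{-2}$ by \eqref{eq:ratio}, this gives the last expression in \eqref{eq:chi}.
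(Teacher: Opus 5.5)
Your proposal is correct and follows the paper's proof: the same Cauchy--Schwarz bound in $L^2(Q)$ against $w$ (using $\E_Q[w]=0$), truncation of $w$ to bounded functions to approach the supremum, positivity of $q$ for the $\chi=0$ case, and direct differentiation of the two normalizations. Truncating with $\clip(w,-n,n)$ rather than the paper's $w\,\mathbf 1_{[-M,M]}$, and proving the derivative formulas first, are only cosmetic differences.
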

The proof is given in Appendix~\ref{app:main-proof}.

When $\chi>0$, in finite dimension the maximizing directions are $h=cw+b\one$ with $c\ne0$. In the continuum a bounded optimum doesn't generally exist, we might approach the same supremum in the limit. Also
\begin{equation}\label{eq:cv}
 \kappa^{-2}=1+\frac{\Var_U(q)}{\left(\E_U[q]\right)^2}.
\end{equation}
For $\phi=\one_A$, with $a_A=U(A)$ and $m=P(A)$, $\chi=\kappa^{-2}\{a_A(1-a_A)+(a_A-m)^2\}$.

\subsection{Target sensitivity and concentration}\label{sec:target-comparison}
Regional averages show why the requested observable matters as much as the solution. Consider a unit-volume domain with a uniform background and a narrow component of fixed mass $\alpha\in(0,1)$:
\begin{equation}\label{eq:bump}
 q=(1-\alpha)+\frac\alpha\delta\one_B,\qquad \nu(B)=\delta.
\end{equation}
Then $Z=1$ and $S=1+\alpha^2(\delta^{-1}-1)=\Theta(\delta^{-1})$. For a fixed region $A\supset B$ of volume $a_A\in(0,1)$ and target $\phi=\one_A$,
\begin{align*}
 F_\phi(q)&=(1-\alpha)a_A+\alpha,\\
 \chi_\phi&=S\bigl[a_A(1-a_A)+\alpha^2(1-a_A)^2\bigr]
             =\Theta(\delta^{-1}).
\end{align*}
The admissible local accuracy remains of constant order because
$\kappa\sqrt{\chi_\phi}=\sqrt{\chi_\phi/S}=\Theta(1)$. Therefore a fixed-mass component whose width shrinks can make the copy cost grow inversely with its width.

Now choose a second region $C$ disjoint from $B$, with $\nu(C)=\delta$, and let $\psi=\one_C$. For the same state,
\begin{align*}
 F_\psi(q)&=(1-\alpha)\delta,\\
 \chi_\psi&=S\bigl[\delta-(1-\alpha^2)\delta^2\bigr]=\Theta(1).
\end{align*}
Both targets are nonconstant indicators with norm one, but their local copy rates differ by $\Theta(\delta^{-1})$. For sufficiently small $\delta$, this holds in $0<\varepsilon\leq\sqrt{\delta-(1-\alpha^2)\delta^2}/12$, and Theorem~\ref{thm:main} gives rates $\Theta(\delta^{-1}\varepsilon^{-2}\log(1/\beta))$ for $\phi$ and $\Theta(\varepsilon^{-2}\log(1/\beta))$ for $\psi$. Concentration alone therefore does not determine the cost of the requested output.

\section{Sharp local copy complexity}\label{sec:copies}

The susceptibility describes infinitesimal changes. Our main result shows that it also determines a finite copy budget, uniformly over a whole neighborhood.

Choose a positive reference solution $q_0$ and use subscript $0$ for all its quantities, writing $\kappa_0:=\kappa(q_0)$ and $\chi_0:=\chi_\phi(q_0)$ explicitly. We call $q_0$ the center, write this neighborhood as $\Nzero(q_0)$, and allow every positive input whose normalized state lies in it:
\begin{equation}\label{eq:neighborhood}
 \Nzero=\{q>0\text{ a.e.}:q\in L^2(\nu),\ 
                   \norm{x_q-x_0}_2\leq\kappa_0/4\},
\end{equation}
where the radius is independent of the requested accuracy.

\subsection*{What is the optimal copy budget?}
Fix an error tolerance $\varepsilon>0$ and a failure probability $\beta$. A protocol with budget $k$ may use at most $k$ copies on every run. It must return an estimate $T$ satisfying
\begin{equation}\label{eq:uniform-success}
 \sup_{q\in\Nzero}
 \PP_q\left[\abs{T-F_\phi(q)}>\varepsilon\right]\leq\beta.
\end{equation}
That is, worst-case in $\Nzero$, the region where estimator error is bigger than $\varepsilon$ must have probability less than $\beta$. We define
\begin{equation}\label{eq:copy-budget}
 K^*(\varepsilon,\beta;\Nzero)
 =\min\{k\in\mathbb N_0:\text{such a protocol exists}\}.
\end{equation}
Therefore $K^*$ is the smallest sufficient worst-case copy budget, optimized over all measurements and estimators. For another known input class $\mathcal C$, the notation $K^*(\varepsilon,\beta;\mathcal C)$ has the same meaning with $\Nzero$ replaced by $\mathcal C$.

The known center can be used to choose this budget. The measurement settings and ratio estimator require only $\nu$ and $\phi$.

\begin{theorem}[Local minimax copy complexity]\label{thm:main}
Under the assumptions of Section~\ref{sec:model}, suppose $\chi_0>0$. If
\begin{equation}\label{eq:accuracy}
 0<\varepsilon\leq\frac{\kappa_0\sqrt{\chi_0}}{12},
 \qquad 0<\beta\leq1/3,
\end{equation}
then
\begin{equation}\label{eq:main-bounds}
 \begin{aligned}
 \frac{\chi_0}{9\varepsilon^2}
       \log\frac1{4\beta(1-\beta)}
 &\leq K^*(\varepsilon,\beta;\Nzero),\\
 K^*(\varepsilon,\beta;\Nzero)
 &\leq 2m(\beta)\left\lceil\frac{48\chi_0}{\varepsilon^2}\right\rceil,
 \end{aligned}
\end{equation}
where
\[
 m(\beta)=
 \begin{cases}
 1, & \beta=1/3,\\
 2\left\lceil 9\log(1/\beta)-\tfrac12\right\rceil+1,
    & 0<\beta<1/3.
 \end{cases}
\]
Therefore \[K^*(\varepsilon,\beta;\Nzero)=\Theta(\chi_0\varepsilon^{-2}\log(1/\beta))\] with universal constants.
\end{theorem}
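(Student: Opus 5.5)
The proof has two halves: a two-point lower bound and a two-setting ratio estimator with a median-of-means boost.

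\textbf{Lower bound.} We construct two inputs $q_\pm\in\Nzero$ whose targets differ by more than $2\varepsilon$ while their states are close in fidelity. Starting from the optimal direction of Proposition~\ref{prop:chi}, we take $h=w/\norm{w}_{L^2(Q_0)}$ with $w=(S_0/(Z_0q_0))(\phi-\mu_0)$. Since $w$ may be unbounded where $q_0$ is small, we instead perturb additively, $q_\pm=q_0\pm t\,g$ with $g\propto(\phi-\mu_0)$. Here $q_0h\propto(\phi-\mu_0)$, which is bounded, and positivity is restored by taking positive parts $q_\pm=(q_0\pm t\,c(\phi-\mu_0))_+$ if needed. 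The positive-part step is what avoids any minimum-density assumption.

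For this pair we compute two things. First, $F_\phi(q_\pm)-\mu_0=\pm t\,c\int(\phi-\mu_0)^2\,\mathrm d\nu/Z_0+O(t^2)$. The second order is controlled exactly because numerator and denominator are linear in $q$, so $F$ is a ratio of affine functions of $t$. Second, $1-\abs{\ip{x_+}{x_-}}^2\le C t^2\int(\phi-\mu_0)^2\,\mathrm d\nu/S_0$.

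Choosing $t$ so that the target gap is $2\varepsilon$ (more precisely, at least $2\varepsilon$) gives infidelity $1-F\le 9\varepsilon^2/\chi_0$, up to the stated constants. The accuracy condition $\varepsilon\le\kappa_0\sqrt{\chi_0}/12$ is exactly what keeps $\norm{x_\pm-x_0}\le\kappa_0/4$, so both inputs lie in $\Nzero$, and also keeps the truncation harmless.

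Any protocol with $k$ copies that meets~\eqref{eq:uniform-success} distinguishes $x_+^{\otimes k}$ from $x_-^{\otimes k}$ with error at most $\beta$ on each. Helstrom's bound for pure states then gives $\abs{\ip{x_+}{x_-}}^{2k}\le4\beta(1-\beta)$. Using $-\log\abs{\ip{x_+}{x_-}}^2\le 9\varepsilon^2/\chi_0$, this yields $k\ge(\chi_0/9\varepsilon^2)\log(1/(4\beta(1-\beta)))$. I expect the main obstacle to be the bookkeeping of constants, namely showing that the truncated perturbation still separates the targets by $2\varepsilon$ while the infidelity stays below $9\varepsilon^2/\chi_0$.

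\textbf{Upper bound.} We use~\eqref{eq:ratio}. The numerator $a=\ip{v}{\Phi x}$ and the denominator $b=\ip{v}{x}=\kappa$ are each estimated with a fixed single-copy two-outcome measurement, for instance a projective measurement onto reference states built from $v$ and $\Phi v$. Each such measurement gives an unbiased estimator of a real overlap with variance $O(1)$. Concretely, the numerator is measured with the centered observable $\Phi-\mu$; because $\mu$ is unknown, this is implemented through a centered variance identity that combines $\ip{v}{(\Phi-c)x}$ with the denominator, with $c$ a known constant.

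The ratio error then satisfies
\[
\hat a/\hat b-\mu=\bigl((\hat a-a)-\mu(\hat b-b)\bigr)/\hat b,
\]
and the numerator of this expression has variance bounded by a multiple of $\int(\phi-\mu)^2\,\mathrm d\nu/W$. Dividing by $b^2=\kappa^2$ gives a variance of $O(\chi/n)$ per batch of $n$ copies for each setting.

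On $\Nzero$, $\abs{\kappa-\kappa_0}\le\kappa_0/4$ and $\chi\le C\chi_0$. Chebyshev's inequality with $n=\lceil48\chi_0/\varepsilon^2\rceil$ then gives per-batch failure at most $1/3$, after also ensuring $\hat b\ge\kappa_0/2$ with the same budget. For $\beta=1/3$ one batch per setting suffices, giving $m=1$. For smaller $\beta$ we take the median of $m(\beta)$ independent batch estimates, and Hoeffding's inequality gives failure at most $\exp(-m/9)\le\beta$. This produces the stated $2m(\beta)\lceil48\chi_0/\varepsilon^2\rceil$. Combining the two bounds gives the $\Theta$ statement.
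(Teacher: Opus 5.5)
Your top-level plan (a two-point Helstrom bound plus a two-setting ratio estimator with a median boost) is the paper's. However, the lower-bound pair fails at the theorem's delicate point: positivity without a density floor.

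\textbf{Lower bound.} In normalized form your alternatives are $(x_0\pm td)_+$, with $d=(\phi-\mu_0)/\ell_0$ and $\ell_0=\int x_0\,\mathrm d\nu$. Taking positive parts has two consequences. First, the inputs vanish on sets of positive measure, so they are not in $\Nzero$, which requires $q>0$ a.e. Second, the truncation moves the target at first order in $t$, so your expansion with an $O(t^2)$ remainder is false. For example, take counting measure on two points, $q_0=(1,\eta)$, $\phi=(1,-1)$, and $\eta<2t$ small. Then $\chi_0\approx4$ and $d\approx(2\eta,-2)$. The vector $(x_0+td)_+\approx(1,0)$ has target $1\approx\mu_0$ instead of $\mu_0+4t$, so the gap between your alternatives is about $t\chi_0$, half of what you claimed. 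In general you only get $\ip{d}{(x_0+td)_+}\ge t\norm{d_+}^2$ and $\ip{d}{(x_0-td)_+}\le-t\norm{d_-}^2$. Neither the separation nor the constant $9$ follows from what you wrote.

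The missing idea is the paper's one-sided split. Write $d=d_+-d_-$ and take $y_\pm=x_0+td_\pm$ with $t=3\varepsilon/\chi_0$. These vectors are strictly positive because $x_0>0$ and $d_\pm\ge0$. They have norm at least one because $\ip{x_0}{d_\pm}\ge0$, and they differ by exactly $td$. With $r=v/\kappa_0$ they satisfy $\mathcal F(x_\pm)-\mu_0=\pm t\norm{d_\pm}^2/(1+t\ip{r}{d_\pm})$, with denominators in $[1,5/4]$, so the gap is at least $\tfrac45 t\chi_0>2\varepsilon$. Since $d\perp x_0$ forces $\ip{x_0}{d_+}=\ip{x_0}{d_-}$, the fidelity is $f=B^2/\bigl((B+t^2\norm{d_+}^2)(B+t^2\norm{d_-}^2)\bigr)$ with $B=1+2t\ip{x_0}{d_+}\ge1$. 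Hence $-\log f\le t^2\chi_0=9\varepsilon^2/\chi_0$, and your Helstrom step then applies verbatim.

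\textbf{Upper bound.} Your description of what is measured is not implementable. Every expectation available from copies has the form $\ip{x^{\otimes k}}{E\,x^{\otimes k}}$, an even polynomial in $x$. Such a quantity cannot equal the odd, degree-one overlaps $\ip{v}{x}$ or $\ip{v}{\Phi x}$ on an open subset of the sphere, so no measurement estimates them unbiasedly.

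What is estimable are quadratic forms. For $D=\ket{v}\bra{v}$, $\ip{x}{Dx}=\kappa^2$. For $C=(\ket{v}\bra{z}+\ket{z}\bra{v})/2$ with $z=(\Phi-aI)v$ and $a=\E_U[\phi]$, $\ip{x}{Cx}=\ip{v}{x}\ip{z}{x}=\kappa^2(\mu-a)$. This is the concrete form of your centered identity, and it gives $\mu=a+\E[C]/\E[D]$, with denominator $\kappa^2$ rather than $\kappa$.

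The variance bookkeeping changes accordingly. We have $\E[C^2]=\kappa^4\chi/4$ and $\Var(\overline D)\le\kappa^2/n$. Set $\Delta=\mu-a$ and $V=(\overline C-\kappa^2\Delta)-\Delta(\overline D-\kappa^2)$. Using $\Delta^2\le\kappa^2\chi$, one gets $\E[V^2]\le5\kappa^4\chi/(4n)$. On $\{\overline D\ge\kappa^2/2\}$ the squared error is at most $4V^2/\kappa^4$. The complement has probability at most $4/(n\kappa^2)$, which the accuracy condition turns into $O(\chi_0/(n\varepsilon^2))$. Together with $\kappa\ge3\kappa_0/4$ and $\chi\le(16/9)^2\chi_0$ on $\Nzero$, this is what makes $n=\lceil48\chi_0/\varepsilon^2\rceil$ work.

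Finally, Hoeffding with block failure $1/3$ gives $e^{-m/18}$, not $e^{-m/9}$. This still suffices, because $m(\beta)\ge18\log(1/\beta)$.
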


\subsection{Two fixed measurements that attain the bound}
Two predetermined single-copy measurements attain the upper bound. They depend only on the known measure and observable. Consider:
\begin{equation}\label{eq:main-settings}
 \begin{aligned}
 a&=\E_U[\phi],& z&=(\Phi-aI)v,\\
 \sigma^2&=\norm{z}^2=\Var_U(\phi),\\
 C&=\frac{\ket v\bra z+\ket z\bra v}{2},&
 D&=\ket v\bra v .
 \end{aligned}
\end{equation}
The projector $D$ tests for the uniform reference $v$, while $C$ couples that reference to the centered target vector $z$. Their measurement outcomes satisfy
\begin{equation}\label{eq:main-moments}
 \begin{aligned}
 \E[C]&=\kappa^2(\mu-a),& \E[D]&=\kappa^2,\\
 \E[C^2]&=\norm{Cx_q}_2^2=\frac{\kappa^4\chi}{4}.
 \end{aligned}
\end{equation}
Therefore $\mu=a+\E[C]/\E[D]$, which gives a strong-measurement representation of a weak value~\cite[Sec.~II]{cohen-pollak}. Replacing these means by independent sample averages gives the estimator. After division by the squared denominator, the numerator's second moment contributes $\chi/4$.

Within the neighborhood, $\kappa\geq3\kappa_0/4$ and
$\chi\leq(16/9)^2\chi_0$, so the same copy budget works for every allowed state. Repeating the estimator independently and taking the median shows logarithmic dependence on the failure probability. In Appendix~\ref{app:main-proof} we give the estimator, its variance bound, and the constants.

\subsection{Why fewer copies are not sufficient}
To prove the lower bound, we construct two states in $\Nzero$ that are hard to distinguish but whose target values differ by more than $2\varepsilon$. An estimator that is accurate to $\varepsilon$ on both states can be used to distinguish them. Write $r=v/\kappa_0$ and $d=(\Phi-\mu_0I)r$, so $\|d\|_2^2=\chi_0$ and $d\perp x_0$. Splitting $d=d_+-d_-$ and normalizing $x_0+td_+$ and $x_0+td_-$ separately preserves strict positivity. With $t=3\varepsilon/\chi_0$, the two alternatives remain in $\Nzero$, their target values differ by more than $2\varepsilon$, and their fidelity has:
\[
 -\log |\langle x_+,x_-\rangle|^2
 \leq \frac{9\varepsilon^2}{\chi_0}.
\]
Any uniformly successful estimator must distinguish these two target values. Binary quantum testing~\cite[Corollary 4]{meyer} and the Helstrom bound~\cite{helstrom} then force the stated copy cost, even for biased estimators and collective measurements. We show how to construct this pair inside the positive neighborhood without a lower density floor. Appendix~\ref{app:main-proof} proves its separation and fidelity bounds.

\subsection{Copy bounds for other output families}
\label{sec:modulus}

Theorem~\ref{thm:main} gives finite-sample copy bounds on $\Nzero$. For a different output family $\mathcal C$, the bounds depend on which inputs remain admissible. Containing $\Nzero$ preserves the lower bound, while being contained in $\Nzero$ preserves the upper bound. A lower bound for a restricted family can be obtained by constructing hard pairs within that family.

Let $\mathcal C$ be a nonempty class of positive inputs, with the same copy access and worst-case guarantee as in Section~\ref{sec:copies}. Define the separated pairs
\[
 \mathcal P_\varepsilon(\mathcal C)
 =\{(q,q')\in\mathcal C^2:
       \abs{F_\phi(q)-F_\phi(q')}>2\varepsilon\}.
\]
A gap greater than $2\varepsilon$ is required, as targets exactly $2\varepsilon$ apart can both be estimated by their midpoint with error at most $\varepsilon$. The smallest fidelity divergence among these pairs is
\begin{equation}\label{eq:moduli}
 \Lambda_{\mathcal C}(\varepsilon)
 =\inf_{(q,q')\in\mathcal P_\varepsilon(\mathcal C)}
   \left[-\log\abs{\ip{x_q}{x_{q'}}}^2\right].
\end{equation}
These pairs give lower bounds through quantum hypothesis testing~\cite{donoho,meyer}.

\begin{proposition}[Readout bounds from separated targets]
\label{prop:modulus}
If $\mathcal P_\varepsilon(\mathcal C)$ is nonempty and $0<\beta<1/2$, then
\begin{equation}\label{eq:global-lower}
 K^*(\varepsilon,\beta;\mathcal C)
 \geq
 \frac{\log(1/[4\beta(1-\beta)])}
      {\Lambda_{\mathcal C}(\varepsilon)},
\end{equation}
with extended-value conventions.
\end{proposition}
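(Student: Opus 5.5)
The approach is a reduction to binary quantum hypothesis testing between two pure states, followed by an infimum over separated pairs. Fix any pair $(q,q')\in\mathcal P_\varepsilon(\mathcal C)$ and any protocol with budget $k$ that satisfies the uniform guarantee~\eqref{eq:uniform-success} on $\mathcal C$. Since $\abs{F_\phi(q)-F_\phi(q')}>2\varepsilon$, the intervals $[F_\phi(q)-\varepsilon,F_\phi(q)+\varepsilon]$ and $[F_\phi(q')-\varepsilon,F_\phi(q')+\varepsilon]$ are disjoint. I will build a test from the estimator: decide ``$q$'' if $\abs{T-F_\phi(q)}\leq\varepsilon$, and ``$q'$'' otherwise. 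Under either hypothesis, this test errs only if the estimator fails. Hence both type-I and type-II errors are at most $\beta$.

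The whole protocol acts on at most $k$ copies. If it is adaptive or uses a random number of copies, I pad it to exactly $k$ copies, since unused copies can be discarded. It therefore amounts to a single two-outcome POVM on $\rho^{\otimes k}$ versus $\sigma^{\otimes k}$, where $\rho=\ket{x_q}\bra{x_q}$ and $\sigma=\ket{x_{q'}}\bra{x_{q'}}$. For pure states the tensor powers are pure, with overlap $F_k:=\abs{\ip{x_q}{x_{q'}}}^{2k}$. By the Helstrom bound, the minimal averaged error under equal priors is $\tfrac12\bigl(1-\sqrt{1-F_k}\bigr)$. For the worst case of the two errors, I would use the sharper binary-testing statement cited as~\cite[Corollary 4]{meyer}: if both errors are at most $\beta<1/2$, then $F_k\geq$ is impossible unless
\[
F_k\;\leq\;4\beta(1-\beta).
\]
This follows from the pure-state fidelity inequality $\abs{\ip{a}{b}}\leq\sqrt{\alpha(1-\beta')}+\sqrt{\beta'(1-\alpha)}$ for tests with errors $\alpha,\beta'$, evaluated at $\alpha=\beta'=\beta$, which gives $\abs{\ip{x_q}{x_{q'}}}^{k}\leq 2\sqrt{\beta(1-\beta)}$. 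Taking logarithms gives
\[
k\left[-\log\abs{\ip{x_q}{x_{q'}}}^2\right]\geq\log\frac1{4\beta(1-\beta)}.
\]
The right-hand side is positive because $\beta<1/2$.

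I then divide by the divergence of the pair and take the supremum over pairs, which turns the divergence into its infimum $\Lambda_{\mathcal C}(\varepsilon)$ and yields~\eqref{eq:global-lower} for the minimal budget $K^*$. The extended-value conventions handle two edge cases. If some pair has divergence $0$, the two states coincide up to phase. The test would then need both errors at most $\beta<1/2$ on identical states, which is impossible, so no protocol exists and $K^*=\infty$, matching $c/0=\infty$. If the divergence is $+\infty$ (orthogonal states), the bound reads $K^*\geq0$, which is trivial. The case $k=0$ is covered as well: then $F_0=1>4\beta(1-\beta)$, a contradiction, so $K^*\geq1$.

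The main obstacle I expect is justifying the error-pair inequality in the worst-case (not averaged) form for arbitrary adaptive, collective, variable-length protocols. My plan is to handle the variable-length issue by the padding/purification argument above. For the inequality itself, I would derive it directly for pure states: write the test as a projector $\Pi$ after Naimark dilation, and bound $\abs{\ip{a}{b}}=\abs{\ip{a}{\Pi b}+\ip{a}{(I-\Pi)b}}$ by Cauchy--Schwarz. This gives $\sqrt{1-\alpha}\sqrt{\beta'}+\sqrt{\alpha}\sqrt{1-\beta'}$ without citing external results.
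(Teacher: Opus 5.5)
Your proposal is correct and follows the paper's argument: it turns an estimator that succeeds on a separated pair into a binary test, pads the protocol to a $k$-copy measurement, uses the pure-state condition $f^k\leq 4\beta(1-\beta)$ behind~\eqref{eq:helstrom}, and takes the supremum over pairs. The only difference is that you derive the testing inequality yourself via Cauchy--Schwarz on a Naimark projector, whereas the paper cites Helstrom's equal-prior bound. With both errors at most $\beta$ the two routes give the identical condition, so the worst-case refinement is not needed.
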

\begin{proof}
An estimator that succeeds on both inputs of a separated pair gives a test between their states. Apply~\eqref{eq:helstrom}
to each pair and take the supremum of the resulting lower bounds.
\end{proof}

If $\Lambda_{\mathcal C}(\varepsilon)=0$, no finite uniform copy cap exists for $\beta<1/2$. If $\mathcal P_\varepsilon(\mathcal C)$ is empty, the target values span an interval of length at most $2\varepsilon$ and its midpoint requires zero copies.

Theorem~\ref{thm:main} supplies hard pairs whenever $\mathcal C$ contains an admissible neighborhood. Taking the largest susceptibility among its admissible centers gives the following class-level bound.

\begin{corollary}[Class-level state-only lower bound]
\label{cor:class-lower}
For fixed $\varepsilon>0$, define
\begin{equation}\label{eq:admissible-centers}
 \mathcal A_{\varepsilon}(\mathcal C)
 =\left\{q_0\in\mathcal C:
 \mathcal N_0(q_0)\subseteq\mathcal C,\quad
 \varepsilon\leq\frac{\kappa_0\sqrt{\chi_0}}{12}\right\},
\end{equation}
where $\mathcal N_0(q_0)$ is the neighborhood~\eqref{eq:neighborhood} centered at $q_0$. Then, for $0<\beta\leq1/3$,
\begin{equation}\label{eq:class-lower}
 K^*(\varepsilon,\beta;\mathcal C)
 \geq
 \frac{\log\!\left(1/[4\beta(1-\beta)]\right)}
      {9\varepsilon^2}
 \sup_{q_0\in\mathcal A_{\varepsilon}(\mathcal C)}
       \chi_\phi(q_0),
\end{equation}
with the convention that the supremum is zero if the set is empty.
\end{corollary}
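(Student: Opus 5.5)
The plan is to reduce Corollary~\ref{cor:class-lower} to the lower half of Theorem~\ref{thm:main} by monotonicity of $K^*$ in the input class, and then take a supremum over centers. The first step is the trivial case: if $\mathcal A_{\varepsilon}(\mathcal C)$ is empty, the right-hand side of~\eqref{eq:class-lower} is zero by convention. Since $K^*\geq0$, there is nothing to prove.

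Otherwise fix $q_0\in\mathcal A_{\varepsilon}(\mathcal C)$. I will next record the monotonicity property: if $\mathcal N_0(q_0)\subseteq\mathcal C$, then $K^*(\varepsilon,\beta;\mathcal C)\geq K^*(\varepsilon,\beta;\mathcal N_0(q_0))$. The reason is that any protocol with budget $k$ meeting the uniform guarantee~\eqref{eq:uniform-success} over $\mathcal C$ meets it over the smaller class, so the minimum in~\eqref{eq:copy-budget} over $\mathcal C$ is taken over a subset of the feasible budgets for $\mathcal N_0(q_0)$.

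Then I apply Theorem~\ref{thm:main} at center $q_0$. Its hypotheses are the model assumptions of Section~\ref{sec:model} together with~\eqref{eq:accuracy}, and the accuracy condition is exactly the defining inequality of $\mathcal A_\varepsilon(\mathcal C)$. With $0<\beta\leq1/3$, the theorem yields $K^*(\varepsilon,\beta;\mathcal N_0(q_0))\geq \chi_0\log(1/[4\beta(1-\beta)])/(9\varepsilon^2)$.

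The only point needing care is the hypothesis $\chi_0>0$ of Theorem~\ref{thm:main}. This is not an obstacle: if $\chi_0=0$, the claimed bound for that center reads $K^*\geq0$ and holds trivially. If $\chi_0>0$, the theorem applies directly. (Strictly, the admissibility condition $\varepsilon\leq\kappa_0\sqrt{\chi_0}/12$ with $\varepsilon>0$ already forces $\chi_0>0$.)

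Combining the last two steps gives, for every admissible center,
\[
K^*(\varepsilon,\beta;\mathcal C)\geq \frac{\log(1/[4\beta(1-\beta)])}{9\varepsilon^2}\,\chi_\phi(q_0).
\]
The left side does not depend on $q_0$, so taking the supremum over $q_0\in\mathcal A_\varepsilon(\mathcal C)$ gives~\eqref{eq:class-lower}. This includes the case where the supremum is $+\infty$, read with the extended-value convention, in which case no finite budget exists. Alternatively, the same conclusion follows from Proposition~\ref{prop:modulus}, using the hard pair built in the proof of Theorem~\ref{thm:main}, which lies in $\mathcal N_0(q_0)\subseteq\mathcal C$ and bounds $\Lambda_{\mathcal C}(\varepsilon)\leq 9\varepsilon^2/\chi_0$. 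I would mention this route but use the simpler monotonicity argument.
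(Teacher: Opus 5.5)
Your proof is correct and follows the paper's argument: a protocol that succeeds uniformly on $\mathcal C$ also succeeds on each $\mathcal N_0(q_0)\subseteq\mathcal C$, so the lower bound of Theorem~\ref{thm:main} at every admissible center transfers to $\mathcal C$, and taking the supremum gives~\eqref{eq:class-lower}. Your explicit treatment of the empty and infinite cases, and your remark that admissibility with $\varepsilon>0$ forces $\chi_0>0$, fill in details that the paper leaves implicit.
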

\begin{proof}
For every $q_0\in\mathcal A_{\varepsilon}(\mathcal C)$, a successful protocol on $\mathcal C$ also succeeds uniformly on $\mathcal N_0(q_0)$. Theorem~\ref{thm:main} therefore gives the corresponding lower bound. Taking the supremum proves~\eqref{eq:class-lower}.
\end{proof}

For a family $\mathcal C\subseteq\Nzero$, the two fixed measurements instead give the upper bound
\[
 K^*(\varepsilon,\beta;\mathcal C)
 =O\!\left(\chi_0\varepsilon^{-2}\log(1/\beta)\right)
\]
in the accuracy and confidence range of Theorem~\ref{thm:main}. If, for each such $\varepsilon$, the family contains $q_+,q_-$ with
\begin{equation}\label{eq:restricted-pair}
 \begin{aligned}
 \abs{F_\phi(q_+)-F_\phi(q_-)}&>2\varepsilon,\\
 -\log\abs{\ip{x_{q_+}}{x_{q_-}}}^2
 &\leq c_{\mathrm{div}}\frac{\varepsilon^2}{\chi_0},
 \end{aligned}
\end{equation}
for a constant $c_{\mathrm{div}}$ independent of $\varepsilon$, Proposition~\ref{prop:modulus} gives the matching lower bound. Thus
\begin{equation}\label{eq:restricted-rate}
 K^*(\varepsilon,\beta;\mathcal C)
 =\Theta\!\left(\frac{\chi_0}{\varepsilon^2}
                         \log\frac1\beta\right),
\end{equation}
up to constants depending only on $c_{\mathrm{div}}$. Containing a whole neighborhood supplies hard pairs by Corollary~\ref{cor:class-lower}.

A large susceptibility alone does not ensure a large copy cost on a restricted family. For example, consider $\nu_i=1$, $q=(1,\delta,\ldots,\delta)$ in dimension $n+1$, and $\phi=(0,1,\ldots,1)$, restrict $0<\delta\leq\varepsilon/(4n)$, with $0<\varepsilon\leq1$. Then
\begin{equation}\label{eq:thin}
 F_\phi(q)=\frac{n\delta}{1+n\delta}\leq\varepsilon/4,
 \qquad
 \chi_\phi(q)
 =n\left(\frac{1+n\delta^2}{(1+n\delta)^2}\right)^2
 \geq n/16.
\end{equation}
Returning zero succeeds without measurements because this family excludes the hard pairs. Applying the lower bound to a numerical solver therefore requires hard pairs among its possible outputs. Section~\ref{sec:applications} examines this requirement in concrete numerical settings.

\subsection{The measurement basis}
Can the same copy bound be achieved by measuring in the computational basis, and processing the samples classically? The following proposition shows that it cannot. On the same neighborhood, at fixed accuracy and confidence, every basis-only estimator requires $\Omega(\sqrt D)$ copies, while our two settings need a copy budget independent of $D$.

\begin{proposition}[Basis-only finite-sample separation]
\label{prop:basis-gap}
Let $D$ be divisible by four, use counting measure on $D$ coordinates, and take $q_0=\one$ and $\phi=\one_A$, where $A$ is the first half of the coordinates. Then $\kappa_0=1$ and $\chi_0=1/4$. On the neighborhood~\eqref{eq:neighborhood}, every basis-only estimator that achieves accuracy $\varepsilon_*=1/128$ and failure probability at most $1/3$ with deterministic sample cap $k$ satisfies
\begin{equation}\label{eq:basis-gap}
 k(k-1)\geq D/3.
\end{equation}
The bound already holds on a subclass with $1/(2D)\leq Q_i\leq3/(2D)$. The two settings in~\eqref{eq:main-settings} attain the same accuracy and confidence with a copy cap independent of $D$.
\end{proposition}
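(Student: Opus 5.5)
The plan is a two-point (really two-mixture) argument in which basis samples carry no information about the target until a collision occurs, so the problem reduces to a birthday bound. The fixed-$D$ upper bound then comes directly from Theorem~\ref{thm:main}.

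\emph{Construction.} With counting measure and $q_0=\one$, the uniform reference is $v=x_0$. Hence $\kappa_0=1$, $\mu_0=1/2$ and $\chi_0=D\cdot\tfrac1D\cdot\tfrac14=1/4$. In basis sampling the outcome law is $Q_i=x_i^2$ and $q_i\propto\sqrt{Q_i}$. I will build two randomized families $H_+$ and $H_-$, each keeping $Q(A)=Q(A^c)=1/2$.

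Under $H_+$, set $Q_i=1/D$ on $A$. Then choose a uniformly random half $R\subset A^c$ (using $4\mid D$) and set $Q_i=3/(2D)$ on $R$ and $Q_i=1/(2D)$ on $A^c\setminus R$. Define $H_-$ the same way with $A$ and $A^c$ swapped.

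A direct computation gives $F_\phi=1/(1+c)$ under $H_+$ and $F_\phi=c/(1+c)$ under $H_-$, where $c=(\sqrt{3/2}+\sqrt{1/2})/2\approx0.966$. The gap is about $0.0173>2\varepsilon_*=1/64$, so every member of $H_+$ is separated from every member of $H_-$. Moreover
\[
\norm{x_q-x_0}_2^2=\tfrac14\bigl[(\sqrt{3/2}-1)^2+(\sqrt{1/2}-1)^2\bigr]\approx0.034<1/16,
\]
so all these inputs lie in $\Nzero$ and satisfy $1/(2D)\le Q_i\le3/(2D)$, which gives the subclass claim.

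\emph{Indistinguishability without collisions.} Suppose an estimator has accuracy $\varepsilon_*$ and failure probability at most $1/3$ on every input. Thresholding its output at $1/2$ then gives a test between the $H_+$-mixture and the $H_-$-mixture of sample laws with both error probabilities at most $1/3$. Such a test forces total variation at least $1/3$.

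I then condition on the event that the $k$ samples are pairwise distinct. Under either mixture each sample lies in $A$ with probability $1/2$ independently. Once we also condition on no collision and on the $A$/$A^c$ pattern of the samples, symmetry under the random $R$ makes the ordered distinct indices uniform within $A$ and within $A^c$. So the two conditional laws coincide, and $\mathrm{TV}\le\max(\PP_+[\text{coll}],\PP_-[\text{coll}])$.

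For each fixed input, the union bound gives
\[
\PP[\text{coll}]\le\binom k2\sum_iQ_i^2=\binom k2\Bigl(\tfrac1{2D}+\tfrac{5}{8D}\Bigr)=\frac{9k(k-1)}{16D}.
\]
Hence $1/3\le 9k(k-1)/(16D)$, i.e.\ $k(k-1)\ge16D/27\ge D/3$.

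\emph{Main obstacle and upper bound.} I expect the main obstacle to be the rigorous equality of the conditional no-collision laws. The cleanest route is to note that both mixtures are invariant under permutations of $A$ and of $A^c$. Once we condition on no collision and on the $A/A^c$ pattern, an ordered distinct tuple is determined up to such permutations, so it must be uniform. The mixture-TV bound $\mathrm{TV}\le\max$ of the two collision probabilities also needs care; if it proves awkward, I will fall back on the sum of the two probabilities, which still gives $k(k-1)\ge 8D/27$, and tighten the constants by a slightly larger heavy/light ratio if needed.

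For the upper bound, $\varepsilon_*=1/128\le\kappa_0\sqrt{\chi_0}/12=1/24$ and $\beta=1/3$. Theorem~\ref{thm:main} then gives the cap $2\lceil48\cdot\tfrac14\cdot128^2\rceil$ for the two settings~\eqref{eq:main-settings}, which is independent of $D$.
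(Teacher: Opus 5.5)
\textbf{Gap in the total-variation step.} Your construction is sound: the target values $1/(1+c)$ and $c/(1+c)$ differ by more than $1/64$, $\norm{x_q-x_0}_2^2=1-c<1/16$, the subclass bounds hold, thresholding at $1/2$ gives the test, and Theorem~\ref{thm:main} gives the $D$-independent upper bound. The problem is the step from ``the two conditional laws coincide'' to $\mathrm{TV}\le\max(\PP_+[\mathrm{coll}],\PP_-[\mathrm{coll}])$. That lemma needs $L_+$ and $L_-$ to agree, up to normalization, on the whole no-collision event $E$. Exchangeability under permutations of $A$ and of $A^c$ gives less: it only makes each law uniform on each cell $E_P$ of a fixed $A/A^c$ pattern $P$. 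The cell weights are $L_\pm(E_P)=2^{-k}(1-p_\pm(P))$, where $p_\pm(P)$ is the collision probability given $P$, and these weights differ between the hypotheses. Collisions are likelier in the perturbed half, which is $A^c$ under $H_+$ and $A$ under $H_-$. Already for $k=2$ with both samples in $A$,
\[
L_+(E_P)=\tfrac14\bigl(1-\tfrac2D\bigr),\qquad L_-(E_P)=\tfrac14\bigl(1-\tfrac5{2D}\bigr).
\]
So the pattern, conditioned on no collision, still carries information, and the max bound is unproved. Your fallback $\mathrm{TV}\le p_++p_-$ only gives $k(k-1)\ge 8D/27<D/3$, so it misses the stated constant. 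A larger heavy/light ratio does not help: it increases $\sum_iQ_i^2$ and so makes the collision bound worse.

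\textbf{Two repairs.}

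\emph{Keep the pattern refinement to the end.} The part of the total variation on $E$ equals $\tfrac12\sum_P2^{-k}\abs{p_+(P)-p_-(P)}$. The part on $E^c$ is at most $\tfrac12\sum_P2^{-k}\bigl(p_+(P)+p_-(P)\bigr)$. Adding them gives
\[
\mathrm{TV}\le\sum_P2^{-k}\max\bigl(p_+(P),p_-(P)\bigr).
\]
If $P$ has $j$ samples in $A$, the union bound gives $\max(p_+,p_-)\le\frac{5}{2D}\bigl[\binom{j}{2}+\binom{k-j}{2}\bigr]$. Averaging over $j\sim\mathrm{Bin}(k,1/2)$ gives $\mathrm{TV}\le 5k(k-1)/(8D)$, hence $k(k-1)\ge 8D/15\ge D/3$.

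\emph{Use the paper's device instead.} Split the perturbed half into coordinate pairs and put an independent random sign on each pair. Each pair then has mass $2/D$ under both hypotheses, so the pair labels are i.i.d.\ uniform. Every sequence with no repeated pair has probability exactly $D^{-k}$ under both hypotheses. The two laws therefore agree pointwise outside an event whose probability does not depend on the hypothesis, and
\[
\mathrm{TV}\le\binom{k}{2}\frac2D=\frac{k(k-1)}D
\]
follows at once. Your single random half $R$ couples all coordinates of the perturbed half, and that is what destroys this exact agreement.
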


The lower bound uses two families with different target values whose basis samples are hard to distinguish until a hidden pair of coordinates is sampled more than once~\cite{paninski}. Appendix~\ref{app:basis-separation} gives the proof and its continuum consequence.

\subsection{Zeros, signs, and phases}
We now consider relaxing the positivity assumption on $q$.
\paragraph{Zeros and known support}
For non-negative $q$, if a fixed support is known, restrict $\nu$ to that support and use the theorem unchanged. In general, the finite-sample bounds of Theorem~\ref{thm:main}, with the same constants, hold for a non-negative center and its non-negative neighborhood if $\chi$ is replaced by $\bar\chi=(S/Z^2)\int(\phi-\mu)^2\,\mathrm d\nu$ over the full reference space. The upper-bound identities remain valid, and $x_0+td_\pm$ preserves the non-negativity, so the separation and fidelity estimates above also apply.

\paragraph{Signed and complex solution states.}
Positivity gives the target an interpretation as a physical average, however, the ratio readout also works for signed and complex amplitudes. Let $a\ne0,b$ be vectors in a separable Hilbert space, and define
\begin{equation}\label{eq:general-ratio-main}
 \begin{aligned}
 F_{a,b}(q)&=\frac{\ip{b}{q}}{\ip{a}{q}},\\
 \chi_{a,b}(q)&=
 \frac{\norm{q}^2\norm{b-\overline{F_{a,b}(q)}a}^2}
      {\abs{\ip{a}{q}}^2}.
 \end{aligned}
\end{equation}

For $\ip{a}{q}\ne0$, the ratio is invariant under global phase. Theorem~\ref{thm:general-ratio} gives the same minimax rate using two settings for real data and three for complex data. For signed $q$, the denominator is not the $L^1$ norm. The positive case additionally requires hard pairs that remain positive.

\section{Choosing the budget without a reference solution}\label{sec:adaptive}

In Theorem~\ref{thm:main} we require a reference solution in order to set the deterministic budget. In this section, we show how to obtain a budget without a reference solution through an adaptative procedure, whose copy count is random and bounded in expectation and with high probability, with rare outcomes taking longer to observe.

There are just two probabilities needed to learn: $\eta$, the frequency of $D=1$, and $p$, the frequency of nonzero $C$ outcomes. For $\sigma>0$, write $e=z/\sigma$. Then
\begin{equation}\label{eq:pilot-probabilities}
 \begin{aligned}
 \Pi&=\ket v\bra v+\ket e\bra e,&
 \eta&=\ip{x}{Dx}=\kappa^2,\\
 p&=\ip{x}{\Pi x}
 =\kappa^2\left(1+\frac{(\mu-a)^2}{\sigma^2}\right).
 \end{aligned}
\end{equation}
The identity $C^2=(\sigma^2/4)\Pi$ explains why the nonzero outcomes measure $p$. Since $\chi=\sigma^2p/\eta^2$ and $0<\eta\leq p\leq1$, these two frequencies determine a sufficient budget.

\begin{theorem}[Adaptive allocation]\label{thm:adaptive}
For every $\varepsilon>0$ and $0<\beta\leq1/3$, a protocol using only $C,D$ and known data terminates almost surely, with
\[
 \PP_q\left(\abs{\widehat F-F_\phi(q)}>\varepsilon\right)\leq\beta
\]
for every positive input. Its random copy count $K$ satisfies
\begin{equation}\label{eq:adaptive-order}
 \E_q\!\left[K\right]=O\!\left(\left(\kappa^{-2}+\frac\chi{\varepsilon^2}\right)
                          \log\frac1\beta\right).
\end{equation}
With probability at least $1-\beta/2$, $K$ is bounded by the same order. 
\end{theorem}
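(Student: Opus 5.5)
The protocol runs in two stages. The first is a pilot stage that learns the two frequencies $\eta=\kappa^2$ and $p$ of~\eqref{eq:pilot-probabilities} to within constant factors, using only $D$ and the nonzero-outcome indicator of $C$. The second is the ratio estimator of Theorem~\ref{thm:main}, with its budget set from the pilot estimates of $\chi=\sigma^2p/\eta^2$. If $\sigma=0$, then $\phi$ is constant a.e., so we return $a$ with zero copies. Otherwise we proceed as follows.

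\emph{Pilot stage.} Measure $D$ on fresh copies and stop after the $r$-th success, with $r=\lceil c\log(1/\beta)\rceil$. This is inverse binomial sampling. The number of trials $N_D$ is a sum of $r$ geometric variables, so $\E[N_D]=r/\eta=O(\kappa^{-2}\log(1/\beta))$, and the procedure terminates almost surely because $\eta>0$. A Chernoff bound for negative binomials gives $\hat\eta=r/N_D\in[\eta/2,2\eta]$ with probability at least $1-\beta/8$, and $N_D\le 4r/\eta$ with the same probability. Do the same for $p$, counting nonzero outcomes of $C$; since $C^2=(\sigma^2/4)\Pi$, the event ``outcome nonzero'' has probability exactly $p$. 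Because $p\ge\eta$, this costs at most as much as the first pilot. On the intersection of these good events, $\hat\chi=\sigma^2\hat p/\hat\eta^2$ satisfies $\chi/8\le\hat\chi\le 8\chi$.

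\emph{Main stage.} Using fresh copies, run the median-of-means ratio estimator from the proof of Theorem~\ref{thm:main} with $m(\beta/2)$ blocks, each block using $\lceil c'\hat\chi/\varepsilon^2\rceil$ copies of each setting. The fresh copies make the main stage conditionally independent of the pilot given $\hat\chi$. The main obstacle is that the Theorem~\ref{thm:main} variance analysis controlled the ratio only where the denominator estimate was bounded away from zero. That argument used $\kappa\ge3\kappa_0/4$ together with the condition $\varepsilon\le\kappa_0\sqrt{\chi_0}/12$. Here $\varepsilon$ is arbitrary, so I will also make every block use at least $c''\hat\eta^{-1}$ copies, that is, $\max(\hat\chi/\varepsilon^2,\hat\eta^{-1})$ copies per block. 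This is exactly where the $\kappa^{-2}$ term in~\eqref{eq:adaptive-order} enters.

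\emph{Block analysis.} Within a block, Chebyshev applied to $\bar D$ with $\Var(D)\le\eta$ and $n\gtrsim1/\eta$ gives $\bar D\ge\eta/2$ with probability at least $0.9$. On that event, use the linearization
\[
\hat\mu-\mu=\frac{\bar C-(\mu-a)\bar D}{\bar D}.
\]
The numerator has mean zero and variance
\[
\frac{1}{n}\Bigl(\E[C^2]+(\mu-a)^2\eta\Bigr)\le\frac{1}{n}\Bigl(\frac{\kappa^4\chi}{4}+4\eta\cdot\tfrac{\eta\chi}{\sigma^2}\cdots\Bigr).
\]
It is simpler to bound the variance of $C-(\mu-a)D$ directly by $\kappa^4\chi$ times a constant, as in the identity giving~\eqref{eq:main-moments}; this is the centered variance identity. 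Chebyshev then gives $|\hat\mu-\mu|\le\varepsilon$ with probability at least $0.8$ per block once $n\ge c'\chi/\varepsilon^2$, which holds whenever $\hat\chi\ge\chi/8$. The median over $m(\beta/2)$ blocks fails with probability at most $\beta/2$ by Hoeffding. A union bound with the pilot failure probability $\beta/4$ gives total error probability at most $\beta$, uniformly over positive inputs.

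\emph{Copy count.} For the expectation, conditionally on the pilot the main stage costs
\[
O\Bigl(\log(1/\beta)\,\bigl(\hat\chi/\varepsilon^2+\hat\eta^{-1}\bigr)\Bigr).
\]
I then need $\E[\hat\chi]$ and $\E[1/\hat\eta]$ to be $O(\chi)$ and $O(1/\eta)$. The inverse-sampling design makes this work, since $1/\hat\eta=N_D/r$ has mean $1/\eta$. For $\hat\chi=\sigma^2\hat p N_D^2/r^2$, note that $\hat p\le1$ is not enough. Instead use independence of the two pilots and $\E[N_D^2]\le 2r^2/\eta^2$ for $r\ge1$, together with $\E[\hat p]=\E[r/N_p]\le$ a constant times $p$; the latter holds by Jensen-type bounds for negative binomials when $r\ge2$. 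This gives $\E[\hat\chi]=O(\chi)$. The high-probability bound follows from the same good pilot event of probability at least $1-\beta/4$: on it, the main stage budget is deterministic and of the stated order.
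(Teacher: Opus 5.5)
Your proposal is correct and follows essentially the same route as the paper. Both use independent inverse-binomial pilots for $\eta=\kappa^2$ and $p$, and a conservative per-block budget of order $1/\eta+\chi/\varepsilon^2$ fed into the pointwise ratio bound of Lemma~\ref{lem:public}. Both take a median over fresh blocks and bound the expectation using independence of the pilots together with $\E[N_D^2]\le r(r+1)/\eta^2$ and $\E[r/N_P]\le rp/(r-1)$.

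Two small corrections. First, the bound $\E[r/N_P]\le rp/(r-1)$ does not come from Jensen, which gives the reverse inequality $\E[r/N_P]\ge p$. It comes from the exact negative-binomial identity $\E[(r-1)/(N_P-1)]=p$ for $r\ge2$. Second, your garbled variance display is settled by $\kappa^2\chi=\sigma^2+\Delta^2\ge\Delta^2$. Because $C$ and $D$ are measured on separate copies, this gives $\Var(\overline C)+\Delta^2\Var(\overline D)\le 5\kappa^4\chi/(4n)$.
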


\paragraph{How the budget is learned.}

Two independent pilots wait for a preset number of outcomes with $D=1$ and $C\ne0$, respectively. Their waiting times give constant-factor estimates of $\eta$ and $p$~\cite{mendo-hernando}, which allow to determine a conservative budget. Fresh copies then run the ratio estimator with median amplification. In Appendix~\ref{app:pilot} we give the protocol and its bounds.

Learning the budget adds an extra $\kappa^{-2}$ term, which is the cost of waiting for denominator successes. When $\varepsilon\leq\kappa\sqrt{\chi}$ holds, this term is at most $\chi/\varepsilon^2$, so adaptation still preserves the overall readout rate.

\section{What happens under grid refinement?}\label{sec:continuum}

Refining the grid can represent the same underlying field with more precision, using more amplitudes. In this section we show that, under consistent discretization, the target and susceptibility converge to their continuum values.

\begin{proposition}[Consistent discretization]\label{prop:mesh}
Let finite partitions have cells $A_{h,i}$ of positive measure $\nu_{h,i}=\nu(A_{h,i})$. Let $q_h,\phi_h$ be piecewise constant on these cells, with $q_h>0$, $\abs{\phi_h}\leq1$, and $q_h\to q$, $\phi_h\to\phi$ in $L^2(\nu)$. Then
\begin{equation}\label{eq:mesh}
 Z_h\to Z,\quad S_h\to S,\quad \mu_h\to\mu,\quad
 \kappa_h\to\kappa,\quad\chi_h\to\chi.
\end{equation}
If $\chi>0$, every fixed $\varepsilon<\kappa\sqrt\chi/12$ is in the accuracy range of Theorem~\ref{thm:main} for all sufficiently fine partitions.
\end{proposition}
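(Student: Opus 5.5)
The plan is to express every quantity in \eqref{eq:mesh} as a continuous function of a few integrals against $\nu$, show that each of these integrals converges, and then read off the accuracy-range statement from the limits.

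\emph{Integrals.} Since $W=\nu(\Omega)<\infty$, Cauchy--Schwarz gives $\abs{Z_h-Z}\le\int\abs{q_h-q}\,\mathrm d\nu\le\sqrt W\norm{q_h-q}_2\to0$. Convergence in $L^2$ also gives $\norm{q_h}_2\to\norm q_2$, so $S_h\to S$. For the numerator $N_h=\int\phi_hq_h\,\mathrm d\nu$ we split
\[
\phi_hq_h-\phi q=\phi_h(q_h-q)+(\phi_h-\phi)q .
\]
The first term is controlled by $\abs{\phi_h}\le1$ together with $\norm{q_h-q}_1\to0$. The second is controlled by Cauchy--Schwarz, $\norm{\phi_h-\phi}_2\norm q_2\to0$. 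Hence $N_h\to N=\int\phi q\,\mathrm d\nu$. Similarly $\int\phi_h\,\mathrm d\nu\to\int\phi\,\mathrm d\nu$ and $\int\phi_h^2\,\mathrm d\nu\to\int\phi^2\,\mathrm d\nu$; the last one holds because $\norm{\phi_h}_2\to\norm\phi_2$.

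\emph{Continuity.} We have $Z>0$, since $q>0$ a.e., and $S>0$. The maps $\mu_h=N_h/Z_h$ and $\kappa_h=Z_h/\sqrt{WS_h}$ are therefore continuous at the limit, which gives $\mu_h\to\mu$ and $\kappa_h\to\kappa$. For the susceptibility we use Proposition~\ref{prop:chi} in the expanded form
\[
\chi_h=\frac{S_h}{Z_h^2}\left(\int\phi_h^2\,\mathrm d\nu-2\mu_h\int\phi_h\,\mathrm d\nu+\mu_h^2W\right).
\]
Every ingredient of this expression converges, so $\chi_h\to\chi$.

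Two points need a brief check. The first is that the discrete $\kappa_h$ and $\chi_h$ defined by atomic weights $\nu_{h,i}$ coincide with the continuum formulas applied to piecewise constant functions. This holds because the integrals of piecewise constant functions are exactly the weighted sums, so each grid quantity equals its continuum counterpart for $q_h,\phi_h$. The second is that $q$ may have been assumed only a.e.\ positive. This causes no trouble, because only $Z>0$ is needed for the continuity argument.

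\emph{Accuracy range.} Assume $\chi>0$ and $\varepsilon<\kappa\sqrt\chi/12$. The map $(\kappa,\chi)\mapsto\kappa\sqrt\chi$ is continuous, so $\kappa_h\sqrt{\chi_h}\to\kappa\sqrt\chi$. The strict inequality therefore gives $\varepsilon\le\kappa_h\sqrt{\chi_h}/12$ for all sufficiently fine partitions, and $\chi_h>0$ eventually as well. Applied to the discrete problem with reference $q_h$, this is exactly condition~\eqref{eq:accuracy}.

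I expect no real obstacle here. The only delicate step is the product term $\int\phi_hq_h\,\mathrm d\nu$, where the uniform bound $\abs{\phi_h}\le1$ together with finiteness of $W$ is what makes the $L^1$ control of $q_h-q$ work.
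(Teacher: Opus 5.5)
Your proof is correct and follows essentially the same route as the paper. Finite measure upgrades $L^2$ convergence to $L^1$ convergence, a two-term split with Cauchy--Schwarz handles $\int\phi_hq_h\,\mathrm d\nu$, and $Z>0$ gives convergence of the ratios. The susceptibility converges because $\phi_h-\mu_h\to\phi-\mu$ in $L^2$, which is the same fact as your expanded quadratic, and the discrete experiment is identified with the piecewise-constant one. Your continuity argument for the accuracy range spells out a step that the paper leaves implicit.
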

The proof is given in Appendix~\ref{app:mesh-proof}.

A finer grid doesn't necessarily lead to more readout copies. For computing cell averages with a refining partition, Proposition~\ref{prop:mesh} gives $\chi_h\to\chi$ and $F_{\phi_h}(q_h)\to F_\phi(q)$. Therefore the readout cost remains under control, even as the discretization bias vanishes. The total error satisfies
\begin{equation}\label{eq:grid-error-budget}
 \abs{T-F_\phi(q)}
 \leq \abs{T-F_{\phi_h}(q_h)}
       +\abs{F_{\phi_h}(q_h)-F_\phi(q)}.
\end{equation}

The finite-measure assumption is necessary for this result. On unbounded-measure Lebesgue domains, a small $L^2$ perturbation can still carry a significant amount of $L^1$ mass. Even finite $\chi$ at a center doesn't ensure a similar finite uniform copy bound on any state-norm neighborhood. In Appendix~\ref{app:infinite} we give an explicit counterexample.

\section{Readout with retained classical information}\label{sec:extra-info}

Classical information supplied with the state can restore the solution's scale or resolve part of the target. Knowing one norm leaves the normalized-target copy cost unchanged, while known moments can reduce susceptibility. A linear-system example shows how additional records progressively reduce the cost to zero.

\subsection{Known scale and linear observables}
A solver might provide information about the total mass or the $L^2$ norm of its solution. We show that fixing either of these norms separately does not restrict the normalized states in $\Nzero$, so the copy bound is unchanged.

\begin{proposition}[Known scale and linear observables]
\label{prop:fixed-scale}
Under the assumptions of Section~\ref{sec:model}, set $Z_0=\norm{q_0}_1$ and $s_0=\norm{q_0}_2$, and define
\begin{equation}\label{eq:fixed-scale-classes}
 \begin{aligned}
 \Nzero^{(1)}&=\{q\in\Nzero:\norm{q}_1=Z_0\},\\
 \Nzero^{(2)}&=\{q\in\Nzero:\norm{q}_2=s_0\}.
 \end{aligned}
\end{equation}
In each class the fixed norm is public, with all other access as in the copy model. For every $\varepsilon>0$ and $0<\beta<1$,
\begin{equation}\label{eq:fixed-scale-F}
 \begin{aligned}
 K^*(\varepsilon,\beta;\Nzero^{(1)})
 &=K^*(\varepsilon,\beta;\Nzero^{(2)})\\
 &=K^*(\varepsilon,\beta;\Nzero).
 \end{aligned}
\end{equation}
For $L_\phi(q)=\int\phi q\,\mathrm d\nu$, write $K_L^*$ for the analogous copy budget with $L_\phi$ in place of $F_\phi$. Then
\begin{equation}\label{eq:fixed-scale-L}
 K_L^*(\varepsilon,\beta;\Nzero^{(1)})
 =K^*(\varepsilon/Z_0,\beta;\Nzero).
\end{equation}
In particular, for $\chi_0>0$, $0<\beta\leq1/3$, and $0<\varepsilon\leq Z_0\kappa_0\sqrt{\chi_0}/12$,
\begin{equation}\label{eq:fixed-scale-rate}
 K_L^*(\varepsilon,\beta;\Nzero^{(1)})
 =\Theta\!\left(\frac{Z_0^2\chi_0}{\varepsilon^2}\log\frac1\beta\right).
\end{equation}
\end{proposition}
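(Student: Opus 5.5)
The plan rests on one observation: $\Nzero$ constrains only the normalized state $x_q$, and both the state and $F_\phi$ are invariant under positive rescaling of $q$. A fixed public norm therefore just selects a canonical representative of each ray and carries no information about the state.

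First I would show that the maps $q\mapsto Z_0 q/\norm{q}_1$ and $q\mapsto s_0 q/\norm{q}_2$ send $\Nzero$ onto $\Nzero^{(1)}$ and $\Nzero^{(2)}$, respectively. These rescalings preserve positivity, $L^2$ membership (finite measure gives $q\in L^1$, so $\norm{q}_1\in(0,\infty)$), the normalized state $x_q$, and the target $F_\phi(q)$. Consequently the set of pairs $(x_q,F_\phi(q))$ realized by $\Nzero^{(1)}$, by $\Nzero^{(2)}$, and by $\Nzero$ is the same.

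Next comes the equality of budgets. Inclusion gives $K^*(\cdot;\Nzero^{(j)})\le K^*(\cdot;\Nzero)$. For the reverse inequality, take a protocol for $\Nzero^{(j)}$. Its only inputs are copies of $x_q$ together with the public constant $Z_0$ or $s_0$. That constant is known in advance, since it is a function of the known center, so it can be hard-wired into the protocol. Running this protocol on an arbitrary $q\in\Nzero$ produces exactly the output distribution it produces on the rescaled $\tilde q\in\Nzero^{(j)}$, and $F_\phi(\tilde q)=F_\phi(q)$. The success guarantee therefore transfers, and the budgets coincide. This step is pure bookkeeping. The only care needed is to state precisely that "public norm" adds no input beyond a constant, so protocols on the two classes are literally the same objects.

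For \eqref{eq:fixed-scale-L}: on $\Nzero^{(1)}$ we have $L_\phi(q)=Z_0F_\phi(q)$ with $Z_0$ known. Estimating $L_\phi$ to accuracy $\varepsilon$ is therefore equivalent to estimating $F_\phi$ to accuracy $\varepsilon/Z_0$, by multiplying or dividing the estimate by $Z_0$. This gives $K_L^*(\varepsilon,\beta;\Nzero^{(1)})=K^*(\varepsilon/Z_0,\beta;\Nzero^{(1)})$, which equals $K^*(\varepsilon/Z_0,\beta;\Nzero)$ by the previous step. Finally, the condition $\varepsilon\le Z_0\kappa_0\sqrt{\chi_0}/12$ is exactly \eqref{eq:accuracy} for $\varepsilon/Z_0$. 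Theorem~\ref{thm:main} then yields $\Theta(\chi_0 Z_0^2\varepsilon^{-2}\log(1/\beta))$, which is \eqref{eq:fixed-scale-rate}.

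I expect no real obstacle. The one point to check is that the rescaling maps are onto their target classes, so that every normalized state in $\Nzero$ is realized with the prescribed norm. This holds because both norms are finite and positive on $\Nzero$.
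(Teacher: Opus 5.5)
Your proposal is correct and follows the paper's proof. Both arguments rescale each admissible ray to a representative with the prescribed $L^1$ or $L^2$ norm, conclude that the three classes realize the same state--target pairs and hence pose the same estimation problem, and then use $L_\phi=Z_0F_\phi$ on $\Nzero^{(1)}$ with Theorem~\ref{thm:main} at tolerance $\varepsilon/Z_0$; you just spell out the two budget inequalities that the paper compresses into ``the three classes define the same setting.''
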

\begin{proof}
Every normalized state $x=x_q$ from $\Nzero$ has positive finite $L^1$ norm and representatives
\[
 q^{(1)}(x)=\frac{Z_0x}{\norm{x}_1},\qquad q^{(2)}(x)=s_0x.
\]
Both are in $\Nzero$, and recover exactly the same state $x$ and target $F_\phi$. Therefore the three classes define the same setting, proving~\eqref{eq:fixed-scale-F}. On $\Nzero^{(1)}$, $L_\phi=Z_0F_\phi$, so rescaling an estimator and its error tolerance in both directions proves~\eqref{eq:fixed-scale-L}, so Theorem~\ref{thm:main} gives the rate.
\end{proof}

This result fixes one norm at a time, while fixing both might restrict the normalized states. For a known unit mass, $L_\phi=F_\phi$. Knowing only the $L^2$ norm instead gives $L_\phi(q)=s_0\int\phi x_q\,\mathrm d\nu$.

\subsection{Summaries that resolve target susceptibility}
A solver might keep conserved moments alongside the solution, or compute a few summaries once for use with several later observables. For example, Yamazaki et al.~\cite{yamazaki} extract low-order density Fourier modes and use them to calculate self-gravity. In this section, we study the hardness when the normalized state is supplied together with $m$ normalized linear summaries of the form:
\begin{equation}\label{eq:extra-moments}
 y_j(q)=F_{\psi_j}(q)
       =\frac{\int\psi_j q\,\mathrm d\nu}{\int q\,\mathrm d\nu},
\end{equation}
where the bounded real $\psi_j$, reference $q_0$, and neighborhood $\Nzero$ are known.

For fixed coefficients $c$,
\begin{equation}\label{eq:extra-residual}
 \begin{aligned}
 r_c&=\phi-\sum_jc_j\psi_j,\\
 F_\phi(q)&=\sum_jc_jy_j(q)+F_{r_c}(q).
 \end{aligned}
\end{equation}
Only the residual observable $r_c$ needs readout. Its best susceptibility at the reference state is:
\begin{equation}\label{eq:extra-chi}
 \chi_{\mathrm{rec},0}
   =\min_{c\in\R^m}\chi_{r_c}(q_0).
\end{equation}
Let $\ell_0=\int x_0\,\mathrm d\nu$ and $g_f=(f-F_f(q_0))/\ell_0$, which gives $\chi_f(q_0)=\norm{g_f}_2^2$. Therefore~\eqref{eq:extra-chi} projects the target sensitivity vector onto the span of summary sensitivity vectors, allowing the summaries to resolve that component, while the orthogonal residual governs readout cost~\cite{tsang}. This span is finite-dimensional and closed, so the minimum is attained.

\begin{proposition}[Readout with exact retained summaries]
\label{prop:extra-records}
Let $K^*_{\mathrm{rec}}$ be the smallest worst-case copy budget for one protocol that estimates $F_\phi(q)$ to error $\varepsilon$ with failure at most $\beta$ for every $q\in\Nzero$, receiving the corresponding summaries $y(q)$. If $\chi_{\mathrm{rec},0}>0$ and
\begin{equation}\label{eq:extra-range}
0<\varepsilon\leq\frac{\kappa_0\sqrt{\chi_{\mathrm{rec},0}}}{32},
 \qquad 0<\beta\leq1/3,
\end{equation}
then, with universal constants,
\begin{equation}\label{eq:extra-rate}
 K^*_{\mathrm{rec}}(\varepsilon,\beta;\Nzero)
 =\Theta\!\left(\frac{\chi_{\mathrm{rec},0}}{\varepsilon^2}
                      \log\frac1\beta\right).
\end{equation}
Two single-copy settings determined by known data and $q_0$ achieve the rate. The lower bound allows biased estimators and collective measurements. Exact zero-copy recovery throughout $\Nzero$ is possible precisely when $\phi=a+\sum_jc_j\psi_j$ almost everywhere for some constants $a,c_j$, equivalently $\chi_{\mathrm{rec},0}=0$.
\end{proposition}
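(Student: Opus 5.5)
The plan is to reduce everything to Theorem~\ref{thm:main} applied to the optimal residual observable. For the upper bound and the lower bound we work separately with the minimizer $c^*$ of~\eqref{eq:extra-chi}, which exists because the span is finite-dimensional and closed. We write $r=r_{c^*}$ and $g=g_r$, so that $\chi_{\mathrm{rec},0}=\norm{g}_2^2$. The normal equations for the projection give $\ip{g}{g_{\psi_j}}=0$ for every $j$.

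\emph{Upper bound.} By~\eqref{eq:extra-residual}, $F_\phi(q)=\sum_jc^*_jy_j(q)+F_r(q)$. The summaries are supplied exactly, so it suffices to estimate $F_r(q)$ to accuracy $\varepsilon$. The observable $r$ is bounded but need not satisfy $\abs r\le1$. We therefore set $M=1+\sum_j\abs{c^*_j}$ and apply Theorem~\ref{thm:main} to $r/M$ with tolerance $\varepsilon/M$. Since $\chi_{r/M}=\chi_r/M^2$, both the rate $\chi_r/\varepsilon^2$ and the accuracy condition $\varepsilon\le\kappa_0\sqrt{\chi_r(q_0)}/12$ are invariant under this rescaling. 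The range~\eqref{eq:extra-range} is stricter than that condition, so it suffices. The two settings~\eqref{eq:main-settings} built from $r$ depend on $\nu$, the $\psi_j$, $\phi$ and, through $c^*$, on $q_0$, as claimed. For the zero-copy statement: if $\phi=a+\sum c_j\psi_j$, then $F_\phi=a+\sum c_jy_j$ exactly. Conversely, if $\chi_{\mathrm{rec},0}=0$, then $g=0$ and $r$ is constant a.e.\ by Proposition~\ref{prop:chi}. When this fails, the lower bound below gives a positive copy count for small $\varepsilon$, so exact recovery is impossible.

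\emph{Lower bound.} Proposition~\ref{prop:modulus} is applied to the class $\Nzero$ together with the extra information. A pair whose summaries coincide is equally hard with or without $y$, because $y$ then carries no distinguishing information. The goal is therefore $q_\pm\in\Nzero$ with $y_j(q_+)=y_j(q_-)=y_j(q_0)$, a target gap larger than $2\varepsilon$, and $-\log\abs{\ip{x_+}{x_-}}^2\le c\,\varepsilon^2/\chi_{\mathrm{rec},0}$. The key observation is that $y_j(x_0+th)=y_j(q_0)$ is the \emph{linear} constraint $\int(\psi_j-y_j(q_0))h\,\mathrm d\nu=0$, i.e.\ $\ip{g_{\psi_j}}{h}=0$.

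Following the main construction, we take $d=\ell_0 g$, so that $d\perp x_0$ after recentering. We would split $d=d_+-d_-$ and use $h_\pm=d_\pm+\sum_ja^\pm_j u_j$. The corrections $u_j=x_0\,(\psi_j-y_j(q_0))$ are chosen so that $x_0+th_\pm$ stays strictly positive for small coefficients, since $\abs{u_j}\le2x_0$. The coefficients are chosen to enforce the linear constraints. The difference $h_+-h_-$ already satisfies them because $d\perp g_{\psi_j}$. Consequently one can take $a^+=a^-$, and the corrections cancel in the target gap and contribute only to the common part of the two states.

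We would then repeat the separation and fidelity estimates of Appendix~\ref{app:main-proof}. With $t$ of order $\varepsilon/\chi_{\mathrm{rec},0}$, the gap $\ip{g_\phi}{g}=\norm g^2$ comes from orthogonality, and the divergence is $O(t^2\norm d^2)$.

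\emph{Main obstacle.} The hard step is controlling $a^\pm$. The Gram matrix of the $u_j$ may be ill-conditioned, which could make $a^\pm$ large relative to $t\norm d$. That would destroy positivity or the neighborhood radius $\kappa_0/4$. I would first try to sidestep this by noting that only the common part $a^+=a^-$ is needed. It can be made $O(\norm{d_+}\cdot t)$ by solving the constraints in the span of the $g_{\psi_j}$ themselves using the minimal-norm solution, so that its norm is bounded by the constraint violation $\abs{\ip{g_{\psi_j}}{d_+}}$ and not by the Gram inverse. The smaller constant $1/32$ in~\eqref{eq:extra-range} is presumably the slack used to absorb this correction in the radius and positivity estimates.
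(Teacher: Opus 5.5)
\textbf{Gap in the lower bound.} Your upper bound is the paper's argument: apply Theorem~\ref{thm:main} to the rescaled optimal residual. Take $M=\max(1,\norm{r}_\infty)$, though, because $1+\sum_j\abs{c^*_j}$ bounds $\abs{r}$ only when $\abs{\psi_j}\le1$.

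The gap is in the lower bound, where you demand $y(q_+)=y(q_-)=y(q_0)$. Only $y(q_+)=y(q_-)$ is needed, since the guarantee is uniform over $\Nzero$. Your stronger requirement is in general unattainable without a density floor, whatever correction you choose. Here is a counterexample:
\begin{itemize}
\item Take $\Omega=[0,1]$ with Lebesgue measure, $E=[0,1/2]$, $\psi_1=\one_E$, $\phi=\one_{[0,1/4]}-\one_{(1/4,1/2]}$, and $q_0=\eta'\one_E+\one_{E^c}$ with $0<\eta'\le1/(32\sqrt2)$.
\item Then $c^*=0$, $\chi_{\mathrm{rec},0}=(1+\eta'^2)/(1+\eta')^2$ and $\kappa_0\sqrt{\chi_{\mathrm{rec},0}}=1/\sqrt2$, so $\varepsilon=1/(32\sqrt2)$ is admissible.
\item Yet $\abs{F_\phi(q)}\le y_1(q)$ for every positive $q$. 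So on $\{y_1(q)=y_1(q_0)=\eta'/(1+\eta')\}$ all targets lie in an interval of length at most $2\varepsilon$, and the pair you seek does not exist.
\end{itemize}
This is the real source of your ``ill-conditioning''. Already for one summary, the coefficient of $u_1$ is $-\ip{g_{\psi_1}}{d_+}/\Var_{P_0}(\psi_1)$, where $P_0$ has density $x_0/\ell_0$ and $\Var_{P_0}(\psi_1)=\eta'/(1+\eta')^2$. Positivity therefore fails unless $t=O(\eta')$. Your minimal-norm correction in $\operatorname{span}\{g_{\psi_j}\}$ does not help either: those functions are not dominated by $x_0$, so they push $x_0+th$ negative where $x_0$ is small.

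\textbf{The missing idea: a mass correction.} Make the two alternatives agree with each other rather than with $q_0$. With $p_0=x_0/\ell_0$, $d=g_{r_*}$, $A_\pm=\int d_\pm\,\mathrm d\nu$ and $u=3\varepsilon/(\ell_0\chi_{\mathrm{rec},0})$, the paper sets
\[
 p_\pm=(1-uA_\pm)p_0+ud_\pm .
\]
\begin{itemize}
\item Both have unit mass, so normalized functionals become linear. Using $\int d\,\mathrm d\nu=A_+-A_-$ gives $F_f(p_+)-F_f(p_-)=u\ell_0\ip{g_f}{d}$ for every bounded $f$.
\item Orthogonality of $d$ to the $g_{\psi_j}$ makes all summaries equal. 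Their common value may differ from $y(q_0)$; in the example it is about $3\varepsilon/2$. The target gap is $u\ell_0\chi_{\mathrm{rec},0}=3\varepsilon$.
\item No linear system is solved. The normalized rays are $x_0+t_\pm d_\pm$ with $t_\pm=u\ell_0/(1-uA_\pm)$.
\item Positivity follows from $uA_\pm\le3/32$, which is where the constant $1/32$ enters.
\item The radius and fidelity bounds follow as in Theorem~\ref{thm:main}. The unequal $t_\pm$ are handled by nonexpansiveness of radial projection.
\end{itemize}
Your zero-copy converse relies on the lower-bound pair, so it is repaired by the same construction.
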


For the upper bound, subtract the part of the target supplied by the summaries and estimate the residual. For the lower bound, we construct two positive inputs with exactly the same summaries but whose target values are $3\varepsilon$ apart, meaning that the summaries cannot distinguish them. Their fidelity divergence is $O(\varepsilon^2/\chi_{\mathrm{rec},0})$, so distinguishing their quantum states requires the stated number of copies. In Appendix~\ref{app:record-proof}  we give the construction and checks that both inputs remain in the neighborhood.

\paragraph{A linear-system example.}
On an unweighted grid, let $A$ be any known invertible real matrix, with positive solution $Aq=b$. Let its singular value decomposition be $A=U\Sigma V^{\mathsf T}$, so $Av_j=s_j u_j$ with $s_j>0$ with orthonormal bases $\{u_j\},\{v_j\}$. Assume the solver supplies
\begin{equation}\label{eq:matrix-mode-record}
 y_j(q)=F_{v_j}(q)=\frac{u_j^{\mathsf T}b}{s_j Z}.
\end{equation}
Producing these summaries requires $Z$. It is known in cases such as a unit-mass probability vector, but otherwise, acquiring $Z$ is a separate cost.

Each summary preserves one normalized solution component. If $P_k$ projects onto $\operatorname{span}\{g_{v_1},\ldots,g_{v_k}\}$, then
\begin{equation}\label{eq:modal-tail}
 \chi_{\mathrm{rec},0}^{(k)}
   =\left\|(I-P_k)g_\phi\right\|_2^2.
\end{equation}
If a new sensitivity vector has a nonzero component orthogonal to the earlier directions, normalize that component to $h$. The new summary then removes $|\ip{g_\phi}{h}|^2$ from the susceptibility, whereas a dependent direction removes nothing. Once the retained modes and the constant vector span $\phi$, exact recovery doesn't need more copies. In particular, retaining all components gives
\[
 F_\phi(q)=\sum_{j=1}^N(v_j^{\mathsf T}\phi)y_j(q),
 \qquad \chi_{\mathrm{rec},0}^{(N)}=0.
\]

The ordered modal-tail estimate and its copy-cost consequence are given in Appendix~\ref{app:dictionaries}.

\section{Readout from a reversible preparation circuit}\label{sec:queries}

Coherent access to the preparation routine improves the readout cost quadratically. For the same target and neighborhood, we show the optimal query cost is $\Theta(\sqrt{\chi_0}/\varepsilon)$. The susceptibility still governs the cost, while the precision dependence improves from $\varepsilon^{-2}$ to $\varepsilon^{-1}$. Amplitude estimation~\cite{bhmt} achieves this by estimating overlaps coherently.

Consider a finite-dimensional system with controlled access to $U_q,U_q^\dagger$, where $U_q\ket0=x_q$ has positive real amplitudes. Each controlled call to either unitary counts as one query. Known reference preparations and other computation are considered free in this model. The known center defining~\eqref{eq:neighborhood} is the quantity determining the overlap directions and required precisions.

We state the theorem uniformly over all unitary completions with first column $x_q$. This convention allows us to keep the rate related to the prepared state, even when unspecified columns might carry additional information.

For a full completion model, we let
\begin{equation}\label{eq:completion-class}
 \mathfrak U(q)=\{U:U\text{ unitary},\ U\ket0=\ket{x_q}\},
\end{equation}
and let one fixed $U\in\mathfrak U(q)$ be used consistently, together with its inverse. For a procedure $\mathcal A_T$ using at most $T$ calls to $U$ or $U^\dagger$, define its worst-case failure probability by
\begin{equation}\label{eq:coherent-risk}
 \begin{aligned}
 \mathcal{R}_\varepsilon(\mathcal A_T;\Nzero)
 :=\sup_{q\in\Nzero}\sup_{U\in\mathfrak U(q)}
 \PP_{q,U}\!\left(\abs{\mathcal A_T-F_\phi(q)}>\varepsilon\right).
 \end{aligned}
\end{equation}
Therefore
\begin{equation}
\label{eq:coherent-budget}
 \begin{aligned}
 T^*(\varepsilon,\beta;\Nzero)
 &=\min\{T:\exists\,\mathcal A_T\text{ with at most $T$ queries},\\[-2pt]
 &\hspace{8em}\mathcal R_\varepsilon(\mathcal A_T;\Nzero)\leq\beta\}.
 \end{aligned}
\end{equation}

We now state the main theorem of this section:
\begin{theorem}[Local coherent query complexity]\label{thm:queries}
For $\chi_0>0$ and $0<\varepsilon\leq\kappa_0\sqrt{\chi_0}/12$, the least deterministic query cap in~\eqref{eq:coherent-budget} achieving error at most $\varepsilon$ with success probability at least $2/3$, uniformly on $\Nzero$, satisfies
\begin{equation}\label{eq:query-rate}
 T^*(\varepsilon,1/3;\Nzero)
       =\Theta\!\left(\frac{\sqrt{\chi_0}}{\varepsilon}\right).
\end{equation}
For $0<\beta\leq1/3$, the same construction gives the upper bound
$O\!\left(\sqrt{\chi_0}\,\varepsilon^{-1}\log(1/\beta)\right)$.
\end{theorem}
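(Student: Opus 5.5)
Upper bound: estimate the numerator overlap $\ip{v}{(\Phi-a)x_q}=\ip{z}{x_q}$ and the denominator $\kappa=\ip{v}{x_q}$ directly as amplitudes, not via their squares. With real positive amplitudes, $\kappa>0$ and $\ip{z}{x_q}$ is real. Define a known unitary $V_z$ preparing $z/\sigma$ and $V_v$ preparing $v$. A Hadamard-test style circuit using controlled $U$ and $V_z$ gives an ancilla probability $(1+\sigma^{-1}\ip{z}{x_q})/2$. Amplitude estimation~\cite{bhmt} on this circuit then estimates $\ip{z}{x_q}$ to additive error $\delta_1$ with $O(\sigma/\delta_1)$ queries. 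Similarly, $\kappa$ is estimated to error $\delta_2$ with $O(1/\delta_2)$ queries.

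Output $\widehat F=a+\widehat N/\widehat\kappa$. On $\Nzero$ we have $\kappa\ge 3\kappa_0/4$, and the numerator satisfies $\abs{N}/\kappa=\abs{\mu-a}\le \kappa\sqrt{\chi}\,\cdot$const, using $\chi=\kappa^{-2}(\sigma^2+(a-\mu)^2)$ from Proposition~\ref{prop:chi}. The error propagation is
\[
\abs{\widehat F-\mu}\lesssim \delta_1/\kappa+\abs{\mu-a}\,\delta_2/\kappa .
\]
Choose $\delta_1\asymp\varepsilon\kappa_0$ and $\delta_2\asymp\varepsilon\kappa_0/(\abs{\mu-a}+\varepsilon)$. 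The query counts are then $O(\sigma/(\varepsilon\kappa_0))$ and $O(\abs{\mu-a}/(\varepsilon\kappa_0)+1/\kappa_0)$. Since $\sigma/\kappa_0,\abs{\mu_0-a}/\kappa_0\le\sqrt{\chi_0}$, and $\abs{\mu-a}$ is within a constant of $\kappa_0\sqrt{\chi_0}$ bounds on $\Nzero$ (as in the moment bounds behind Theorem~\ref{thm:main}), both counts are $O(\sqrt{\chi_0}/\varepsilon)$. The leftover $1/\kappa_0$ term is absorbed because $\varepsilon\le\kappa_0\sqrt{\chi_0}/12$ gives $1/\kappa_0\le\sqrt{\chi_0}/(12\varepsilon)$. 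The known center fixes the precisions in advance. The $\log(1/\beta)$ version follows by median amplification of independent runs.

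Lower bound: reuse the hard pair $x_\pm$ from the proof of Theorem~\ref{thm:main}. Both lie in $\Nzero$, have targets separated by more than $2\varepsilon$, and satisfy $1-\abs{\ip{x_+}{x_-}}^2\le 9\varepsilon^2/\chi_0$, so $\norm{x_+-x_-}\lesssim \varepsilon/\sqrt{\chi_0}$ after phase alignment (both are positive real). Construct completions $U_\pm\in\mathfrak U(q_\pm)$ with $\norm{U_+-U_-}_{\mathrm{op}}\le O(\norm{x_+-x_-})$. For example, take $U_-=RU_+$, where $R$ is the rotation in the plane spanned by $x_+,x_-$ mapping $x_+$ to $x_-$ and acting as the identity elsewhere. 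Then $\norm{R-I}\le\norm{x_+-x_-}$, and $U_-^\dagger$ is equally close to $U_+^\dagger$; the controlled versions have the same operator distance. By the hybrid argument, after $T$ queries the final states differ in norm by at most $T\,O(\varepsilon/\sqrt{\chi_0})$. A successful algorithm with probability $2/3$ would distinguish the two oracles, which requires final-state distance at least a constant. Hence $T=\Omega(\sqrt{\chi_0}/\varepsilon)$. Since the risk~\eqref{eq:coherent-risk} is a supremum over all completions, exhibiting this particular pair of completions suffices.

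The main obstacle is the upper bound's uniformity. The error propagation must be controlled uniformly over $\Nzero$ with precisions chosen from $q_0$ alone. This requires the neighborhood bounds on $\kappa$ and on $\abs{\mu-a}/\kappa$, which I would take from the upper-bound analysis of Theorem~\ref{thm:main} in Appendix~\ref{app:main-proof}. The lower bound is routine once the hard pair is available.
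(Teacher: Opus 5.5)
Your proposal is correct and follows essentially the same route as the paper. For the upper bound, it uses amplitude estimation of two real overlaps via Hadamard interference at precision $O(\varepsilon/\sqrt{\chi_0})$, combined through a ratio; for the lower bound, it uses rotated completions of the positive hard pair $x_\pm$ with $\norm{U_+-U_-}_{\mathrm{op}}\leq 3\sqrt2\,\varepsilon/\sqrt{\chi_0}$ fed into the hybrid argument. The differences are cosmetic: you center the numerator at $a=\E_U[\phi]$ rather than at $\mu_0$, so the denominator precision must absorb the uniform bound $\abs{\mu-a}\leq\tfrac43\kappa_0\sqrt{\chi_0}$ on $\Nzero$, and you rotate $x_+$ directly onto $x_-$ instead of rotating $x_0$ onto each of $x_\pm$.
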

We provide the proof for this result in Appendix~\ref{app:coherent-proof}.

\section{Applications: when the output interface matters}\label{sec:applications}

Linear systems, finite elements, and density evolution are some of the most basic tools of scientific computing. Implementing these algorithms efficiently in quantum computers often return the solution in amplitudes, while the required scientific answer needs an integral, average, or expected payoff.

For a linear system, the readout output is proportional to $q=A^{-1}b$~\cite{hhl}. If $q$ is a positive density, concentration, or allocation, normalized weighted sums are:
\[
 F_\phi(q)=\frac{\sum_i\phi_iq_i}{\sum_iq_i},\qquad q_i>0.
\]
These are precisely covered by Theorem~\ref{thm:main}, so their readout conditioning depends on the solution and the requested weights.

For finite elements, Montanaro--Pallister~\cite{montanaro} obtain a linear functional $\int r u_h$, which includes spatially averaged solutions. Our target is its normalized counterpart $\int r u_h/\int u_h$. For positive coefficients in a nonnegative basis, in Appendix~\ref{app:dictionaries} we give an exact translation from the coefficient state to $F_\phi(q)$. Recovering the unnormalized integral additionally requires its physical scale.

For heat and probability densities, the same integrals have a direct physical meaning. Linden--Montanaro--Shao estimate terminal heat $\int_Au(\cdot,T)$~\cite{linden}. For forward-density option pricing requires the payoff integral $\int g \,p_T$~\cite{guseynov}. Basis sampling of the solution state instead weights the squared density, so the susceptibility quantifies the statistical cost of recovering the linear functional.

Table~\ref{tab:interfaces} summarizes these output tasks. Published methods may retain scale, input data, or coherent preparation, which can provide specific workarounds to reduce the extraction cost in specific problems. For readout from state copies alone, Section~\ref{sec:modulus} specifies which bounds apply to the solver's possible outputs.

\begin{table*}[t]
\centering\small
\begin{tabular}{@{}p{0.25\textwidth}p{0.29\textwidth}p{0.40\textwidth}@{}}
\toprule
Algorithmic setting & Positive field or coefficients & Normalized-linear quantity\\
\midrule
\raggedright Linear systems~\cite{hhl} & \raggedright $q=A^{-1}b>0$ & $\sum_i\phi_iq_i/\sum_iq_i$, including normalized sums and averages.\\[4pt]
\raggedright FEM~\cite{montanaro} & \raggedright $u_h=\sum_i c_i\psi_i$, $c_i>0$, $\psi_i\geq0$ & $\int r u_h/\int u_h$, with the exact translation from coefficient states given in Appendix~\ref{app:dictionaries}.\\[4pt]
\raggedright Heat and diffusion~\cite{linden} & \raggedright $q=u(\cdot,T)>0$, representing temperature or density & $\int_Au/\int u$, the regional heat or mass fraction.\\[4pt]
\raggedright Forward-density pricing~\cite{guseynov} & \raggedright Terminal density $q=p_T$ & $F_{g/B}(p_T)$ for $|g|\leq B$. Multiply by mass, payoff scale, and discount factor to obtain the price.\\[4pt]
\raggedright Collisionless Boltzmann~\cite{yamazaki} & \raggedright Phase-space distribution $q=f(x,v,T)$ & Normalized density Fourier modes, with bounded sine and cosine targets for real and imaginary parts.\\[4pt]
\raggedright Radiative transfer~\cite{igarashi} & \raggedright Directional intensities $q=(I_+,I_-)$ & Backward-stream energy fraction $F_{\one_{\{d=-1\}}}(q)$.\\[4pt]
\bottomrule
\end{tabular}
\caption{Uses of solution states produced by established quantum numerical
methods for normalized-linear quantities.}
\label{tab:interfaces}
\end{table*}

The practical question we are concerned about is which solution profiles make these averages costly to recover. Our result in equation~\eqref{eq:chi} separates the effect of concentration from target sensitivity, showing that a uniform bound $Wq/Z\leq M$ gives $\chi\leq4M$, while a certain narrow component can make the cost grow. Consistent grid refinement preserves susceptibility by Proposition~\ref{prop:mesh}, so that it's not the grid size, but physical concentration that creates this penalty.

\subsection{Transport: a narrow pulse obscures a diffuse mass}\label{sec:transport}

In a lossless case of the two-stream update of Igarashi--Kadowaki--Kawabata~\cite{igarashi}, a narrow initial pulse separates into a concentrated component and a diffuse scattered field. Theorem~\ref{thm:transport} makes the readout cost concrete. We show that for the known pulse family with unknown scattering coefficient, recovering the backward-stream energy fraction from total-intensity copies costs $\Theta(\delta^{-1}\varepsilon^{-2}\log(1/\beta))$, where $\delta=m/T$ is the physical pulse width. The lower bound holds within this one-parameter family and allows arbitrary collective measurements. Supplying the mass, together with copies of the normalized diffuse component, reduces to $\chi=O(1)$ and removes the inverse-width factor. The model and proof are given in Appendix~\ref{app:applications}.

\section{Discussion and conclusions}\label{sec:discussion}

Our results develop an output-interface theory for normalized-linear readout. On the basis that preparing a normalized amplitude state is not the same task as producing a useful classical answer from that state, we develop concrete results for the case when the state carries $q$ in its amplitudes but the requested quantity is normalized linearly in $q$, and show that the interface has a target-dependent condition number. That condition number is $\chi_\phi(q)$, which accounts for both the concentration of the amplitude representation and the sensitivity of the requested observable.

The state-only interface is the central solved case. For a fixed positive reference state and its prescribed neighborhood, we show that the minimum number of state copies is
\[
 K^*(\varepsilon,\beta;\Nzero(q_0))
 =\Theta\!\left(\frac{\chi_0}{\varepsilon^{2}}\log\frac1\beta\right)
\]
for the adequate local accuracy and confidence. We show an upper bound, that two fixed single-copy measurements attain the same rate without a minimum-density assumption. Therefore $\chi_0$ quantifies the information cost of using this output representation.

The surrounding results show how this cost changes when the interface is enriched. Retained classical information remove target-sensitivity in the directions they provide, coherent preparation access improves the precision dependence quadratically. Adaptive allocation removes the need to know the reference when choosing the budget in exchange for a small extra cost, and the consistent discretization result shows that none of the factors are a discretization artifact.

We show how these bounds connects with states in common quantum algorithms, and provide concrete accounting for some special cases. Nevertheless, a lower bound for the state-only interface is not an end-to-end lower bound for every solver that produced the state, as the extra information available to the procedure is highly variable accross settings, here we show some of the most common such cases. In some settings, retaining useful classical information, circuit preparation, or targeting the observable directly define richer interfaces and can provide another route to the observable with lower cost. Our results guide solver design and modular computation by providing concrete guarantees for the readout cost of a copy-only output interface.

The relevant factor for algorithm-design is whether the outputs supplied to the readout stage contain hard alternatives at the scale of the susceptibility $\chi$. State-preparation complexity should in general be reported together with the readout interface it supports. A natural next step is to design classical summaries and quantum output states jointly for a specified family of observables, while accounting for both their production and readout costs.

\newpage

\appendix
\section{Proofs of the main results}\label{app:main-proof}

This appendix proves the main theorem and the retained-summary extension, and Appendix~\ref{app:infinite} records why the finite-measure assumption is needed. The proof of the main theorem has three steps: a pointwise ratio estimate, uniform control over the reference neighborhood, and a positive pair of hard alternatives for the converse.

\subsection{Proof of Proposition~\ref{prop:chi}: Explicit susceptibility}
\begin{proof}
Direct integration gives $\E_Q[w]=0$ and $\Cov_P(\phi,h)=\E_Q\left[w\left(h-\E_Q[h]\right)\right]$. Using Cauchy--Schwarz inequality we can bound the quotient by $\E_Q[w^2]$, which is finite by finite measure and boundedness of $\phi$. 

The clipped $w\,\mathrm{1}_{[-M,M]}\overset{M\to\infty}{\to} w$ in $L^2(Q)$ , and substituting these bounded functions shows that the quotient converges to $\E_Q[w^2]$ when it is positive. Cancelling $q^2$ gives the first closed form, and centering around $\E_U[\phi]$ gives the second. Using positivity proves the zero case. Finally, differentiating the two normalizations proves~\eqref{eq:derivatives}.
\end{proof}

\subsection{Proof of Theorem~\ref{thm:main}: Local minimax copy complexity}
\subsubsection{Achievability with two fixed measurements}
Using the uniform reference $v$ from Section~\ref{sec:model}, define two fixed measurements depending only on $\nu$ and $\phi$:
\begin{equation}\label{eq:public}
 \begin{aligned}
 a&=\E_U[\phi], & z&=(\Phi-aI)v,\\
 \sigma^2&=\norm{z}^2=\Var_U(\phi),\\
 C&=\frac{\ket v\bra z+\ket z\bra v}{2},\\
 D&=\ket v\bra v.
 \end{aligned}
\end{equation}
The measurement $D$ projects onto $v$, while $C$ couples $v$ to the centered target $z$. The outcomes of $D$ are $0$ and $1$, while those of $C$ are $0$ and $\pm\sigma/2$. If $\sigma=0$, the answer is $a$. Otherwise
\[
 \mu=a+\frac{\ip{x_q}{Cx_q}}{\ip{x_q}{Dx_q}},
 \qquad \ip{x_q}{Dx_q}=\kappa^2.
\]
Measure each setting on $n$ fresh copies and denote the independent sample means by $\overline C,\overline D$. The estimator is
\begin{equation}\label{eq:estimator}
 T_n=\begin{cases}
 \clip_{[-1,1]}(a+\overline C/\overline D),&\overline D>0,\\
 a,&\overline D=0.
 \end{cases}
\end{equation}
Both operators have rank at most two and are well defined on $L^2(\nu)$. Therefore the weak-value ratio is realized using only strong measurements~\cite{cohen-pollak}.

\begin{lemma}[Pointwise guarantee]\label{lem:public}
For every positive input, every $n\geq1$, and every $\varepsilon>0$,
\begin{equation}\label{eq:pointwise-failure}
 \PP_q\left[\abs{T_n-\mu}>\varepsilon\right]
 \leq\frac4{n\kappa^2}+\frac{5\chi}{n\varepsilon^2}.
\end{equation}
Thus $n\geq27\chi/\varepsilon^2$ suffices for failure at most $1/3$ whenever $\varepsilon\leq\kappa\sqrt\chi$.
\end{lemma}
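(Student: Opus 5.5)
The plan is to linearize the ratio around its true value and apply Chebyshev's inequality twice: once to keep the denominator away from zero, and once to control a centered numerator. Clipping is harmless. Since $\abs{\phi}\leq1$, the target $\mu$ lies in $[-1,1]$, so $\clip_{[-1,1]}$ never increases $\abs{T_n-\mu}$. It therefore suffices to bound the failure event for the unclipped quantity $a+\overline C/\overline D$. We treat $\overline D=0$ as a failure outright.

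\textbf{Step 1: moments of the two sample means.} By~\eqref{eq:main-moments}, the $n$ outcomes of $C$ are i.i.d.\ with mean $\kappa^2(\mu-a)$ and second moment $\kappa^4\chi/4$, so
\[
 \Var(\overline C)\leq\frac{\kappa^4\chi}{4n}.
\]
The outcomes of $D$ are independent Bernoulli$(\kappa^2)$ variables, so $\Var(\overline D)\leq\kappa^2/n$.

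\textbf{Step 2: the denominator.} Let $G=\{\overline D\geq\kappa^2/2\}$. Chebyshev's inequality gives
\[
 \PP(G^c)\leq\frac{\kappa^2/n}{(\kappa^2/2)^2}=\frac4{n\kappa^2},
\]
and $G^c$ contains the event $\overline D=0$. This produces the first term of~\eqref{eq:pointwise-failure}.

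\textbf{Step 3: the centered numerator.} Define $Y=\overline C-(\mu-a)\overline D$, which has mean zero. Since
\[
 a+\frac{\overline C}{\overline D}-\mu=\frac{Y}{\overline D},
\]
on $G$ an error larger than $\varepsilon$ forces $\abs{Y}>\varepsilon\kappa^2/2$. Because $\overline C$ and $\overline D$ come from disjoint copies, they are independent, so
\[
 \Var(Y)=\Var(\overline C)+(\mu-a)^2\Var(\overline D).
\]
The key input is the centered identity from Proposition~\ref{prop:chi}, $\chi=\kappa^{-2}\bigl(\Var_U(\phi)+(a-\mu)^2\bigr)$, which yields $(\mu-a)^2\leq\kappa^2\chi$. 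Hence
\[
 \Var(Y)\leq\frac{\kappa^4\chi}{4n}+\frac{\kappa^4\chi}{n}=\frac{5\kappa^4\chi}{4n}.
\]
Chebyshev's inequality then gives
\[
 \PP\bigl(\abs{Y}>\varepsilon\kappa^2/2\bigr)\leq\frac{5\chi}{n\varepsilon^2}.
\]
A union bound over $G^c$ and this event proves~\eqref{eq:pointwise-failure}.

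\textbf{Step 4: the sample-size consequence.} If $\varepsilon\leq\kappa\sqrt\chi$, then $\kappa^{-2}\leq\chi/\varepsilon^2$. The failure probability is then at most $9\chi/(n\varepsilon^2)$, which is at most $1/3$ once $n\geq27\chi/\varepsilon^2$.

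The main obstacle is controlling the term $(\mu-a)^2\Var(\overline D)$ by $\chi$ rather than by a cruder constant. This is exactly what the centered variance identity provides. Without it, the bias of $\mu$ away from $a$ would enter the bound separately.
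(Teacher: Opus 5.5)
Your proposal is correct and follows the paper's proof essentially step for step. You use the same event $G=\{\overline D\geq\kappa^2/2\}$ controlled by Chebyshev, and your $Y$ coincides with the paper's $V$ since $\E[C]=(\mu-a)\E[D]$; the bound $(\mu-a)^2\leq\kappa^2\chi$ likewise yields the same $5\kappa^4\chi/(4n)$. The only cosmetic difference is that you apply Chebyshev to $Y$ directly, whereas the paper bounds $\E[(T_n-\mu)^2\one_G]\leq 5\chi/n$ and applies Markov, which is the same estimate.
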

\begin{proof}
We control two possible failures separately. The denominator estimate may be too small, and the numerator--denominator combination may fluctuate. Put $\Delta=\mu-a$. The two means are
\[
 N:=\E[C]=\kappa^2\Delta,
 \qquad M:=\E[D]=\kappa^2.
\]
The centered vector $z$ is orthogonal to $v$, so
\[
 \begin{aligned}
 \E[C^2]=\norm{Cx}^2
 &=\frac{\kappa^2\sigma^2+|\ip{z}{x}|^2}{4}
 =\frac{\kappa^4\chi}{4},\\
 \Var(\overline D)&\leq\frac{\kappa^2}{n}.
 \end{aligned}
\]
Let $G=\{\overline D\geq M/2\}$. Chebyshev's inequality gives
\[
 \PP(G^c)
 \leq\frac{4\Var(\overline D)}{M^2}
 \leq\frac{4}{n\kappa^2}.
\]
It remains to control the ratio on $G$. Define
\[
 V=(\overline C-N)-\Delta(\overline D-M).
\]
The two sample means are independent and $N=\Delta M$, hence $\E[V]=0$ and
\[
 \E\!\left[V^2\right]\leq\frac{\kappa^4\chi}{4n}
       +\frac{\kappa^2\Delta^2}{n}
       \leq\frac{5\kappa^4\chi}{4n}.
\]
On $G$, the untruncated ratio has error
\[
 a+\frac{\overline C}{\overline D}-\mu
 =\frac{\overline C-\Delta\overline D}{\overline D}
 =\frac{V}{\overline D}.
\]
Since $\overline D\geq M/2$, the squared error is at most $4V^2/M^2$. Clipping cannot increase the distance to $\mu\in[-1,1]$, so
\[
 \E\left[(T_n-\mu)^2\one_G\right]
 \leq\frac{4\E[V^2]}{M^2}
 \leq\frac{5\chi}{n}.
\]
Markov's inequality on this event, followed by the bound for $G^c$, proves~\eqref{eq:pointwise-failure}.
\end{proof}

\subsubsection{Uniformity over the neighborhood}
The pointwise lemma uses the actual state's $\kappa$ and $\chi$. We now replace both by bounds depending only on the reference state. The measurements remain fixed throughout the neighborhood. Set
\begin{equation}\label{eq:rd}
 r=\frac v{\kappa_0},\qquad
 d=(\Phi-\mu_0 I)r.
\end{equation}
Then $\ip{r}{x_0}=1$, $\ip{d}{x_0}=0$, $\norm{r}=1/\kappa_0$, and $\norm{d}^2=\chi_0$. For $x=x_q\in\Nzero$, the neighborhood condition gives $|\ip{r}{x-x_0}|\leq1/4$ and $|\ip{d}{x-x_0}|\leq\kappa_0\sqrt{\chi_0}/4$. Therefore
\begin{equation}\label{eq:local-controls}
 \begin{aligned}
 \frac34\leq\ip{r}{x}\leq\frac54,\qquad
 \abs{\ip{d}{x}}\leq\frac{\kappa_0\sqrt{\chi_0}}4,\\
 \mathcal F(x)-\mu_0=\frac{\ip{d}{x}}{\ip{r}{x}}.
 \end{aligned}
\end{equation}
The first two bounds imply $\kappa\geq3\kappa_0/4$ and
$|\mu-\mu_0|\leq\kappa_0\sqrt{\chi_0}/3$. To compare the susceptibilities, use the norm of the centered vector at $x$ and the triangle inequality:
\begin{equation}\label{eq:chi-comparison}
 \sqrt\chi=\frac{\norm{(\Phi-\mu I)v}}\kappa
 \leq\frac{\kappa_0\sqrt{\chi_0}+\abs{\mu-\mu_0}}
              {3\kappa_0/4}
 \leq\frac{16}{9}\sqrt{\chi_0}.
\end{equation}
Substituting these two uniform bounds into Lemma~\ref{lem:public} gives
\[
 \PP_q\left[\abs{T_n-\mu}>\varepsilon\right]
 \leq\left(\frac4{81}+\frac{1280}{81}\right)
             \frac{\chi_0}{n\varepsilon^2}
 =\frac{428}{27}\frac{\chi_0}{n\varepsilon^2}.
\]
Therefore $n=\lceil48\chi_0/\varepsilon^2\rceil$ makes one block fail with probability below $1/3$, uniformly over the neighborhood. For smaller $\beta$, take an odd number $m(\beta)$ of independent blocks and return their median. If $B_j$ denotes the failure indicator of block $j$, the median fails only when $\sum_jB_j\geq m(\beta)/2$. The exponential Markov bound applied to $2^{\sum_jB_j}$ gives
\[
 \PP\left[\text{median fails}\right]
 \leq (8/9)^{m(\beta)/2}
 \leq e^{-m(\beta)/18}.
\]
The stated choice of $m(\beta)$ makes this at most $\beta$, proving the upper bound in~\eqref{eq:main-bounds}.

\subsubsection{Optimality: a positive hard pair}
The lower bound needs two allowed positive states whose target values differ but whose amplitude states are difficult to distinguish. The most sensitive direction $d$ need not be nonnegative, so we split it into $d=d_+-d_-$. The two parts have disjoint supports and satisfy
\[
\begin{aligned}
 c&=\ip{x_0}{d_+}=\ip{x_0}{d_-}\geq0,\\
 A_+&=\norm{d_+}^2,\qquad A_-=\norm{d_-}^2,\qquad
 A_++A_-=\chi_0.
\end{aligned}
\]
Choose $t=3\varepsilon/\chi_0$ and define
\begin{equation}\label{eq:positive-pair}
 y_\pm=x_0+td_\pm,\qquad
 x_\pm=\frac{y_\pm}{\norm{y_\pm}}.
\end{equation}
Because $x_0>0$ and $d_\pm\geq0$, both $y_\pm$ are positive. Their norms are at least one, and normalization is a contraction toward the unit sphere, so
\[
 \norm{x_\pm-x_0}\leq t\norm{d_\pm}\leq\frac{\kappa_0}{4}.
\]
Thus both alternatives lie in $\Nzero$.

Scale invariance and~\eqref{eq:rd} give
\[
\begin{aligned}
 \mathcal F(x_+)-\mu_0&=\frac{tA_+}{1+t\ip{r}{d_+}},\\
 \mathcal F(x_-)-\mu_0&=-\frac{tA_-}{1+t\ip{r}{d_-}}.
\end{aligned}
\]
Each denominator lies in $[1,5/4]$, so the target gap is at least $(4/5)t\chi_0=12\varepsilon/5>2\varepsilon$. To quantify how close the two states remain, put $B=1+2tc\geq1$. Their squared fidelity is
\begin{equation}\label{eq:pair-fidelity}
 \begin{aligned}
 f=\abs{\ip{x_+}{x_-}}^2
   &=\frac{B^2}{(B+t^2A_+)(B+t^2A_-)},\\
 -\log f&\leq t^2\chi_0=\frac{9\varepsilon^2}{\chi_0}.
 \end{aligned}
\end{equation}
An estimator with error at most $\varepsilon$ on both alternatives distinguishes them by thresholding at the midpoint of their target values. Any protocol using at most $k$ copies, including an adaptive protocol or one that discards copies, is a measurement on the padded $k$-copy state. The two tensor-power states have trace distance $\sqrt{1-f^k}$. Helstrom's bound~\cite{helstrom} therefore gives
\[
 1-2\beta\leq\sqrt{1-f^k},
\]
which is equivalent to
\begin{equation}\label{eq:helstrom}
 k\geq\frac{\log(1/[4\beta(1-\beta)])}{-\log f}.
\end{equation}
Equations~\eqref{eq:pair-fidelity}--\eqref{eq:helstrom} prove the lower bound. For $0<\beta\leq1/3$, the logarithm above is comparable to $\log(1/\beta)$. The inequality $\chi_0/\varepsilon^2\geq144$ absorbs the ceiling in the upper bound. This completes Theorem~\ref{thm:main}.

\subsection{Proof of Proposition~\ref{prop:basis-gap}: Basis-only separation}\label{app:basis-separation}

This subsection proves Proposition~\ref{prop:basis-gap}. The construction compares two mixtures of basis-sampling transcripts. We partition $A$ and $A^c$ into coordinate pairs and hide the perturbation behind an independent random sign in each pair~\cite{paninski}. A single observation can reveal at most one coordinate from a pair, so the hidden sign becomes informative only when a pair is observed twice.

For counting measure, basis readout estimates
\[
 F_\phi(q)=\frac{\sum_i\phi_i\sqrt{Q_i}}{\sum_i\sqrt{Q_i}}.
\]
The denominator is the square-root power sum underlying R\'enyi entropy of order $1/2$~\cite[Sec.~I C]{acharya-renyi}. Proposition~\ref{prop:basis-gap} concerns a weighted ratio on a near-uniform subclass. Unrestricted entropy estimation rates do not directly give its sample complexity.

In hypothesis $H_A$, assign an independent uniformly random sign $s\in\{-1,1\}$ to each pair in $A$, and give its two coordinates Born probabilities
\[
 \left(\frac{1+s/2}{D},\frac{1-s/2}{D}\right).
\]
All coordinates in $A^c$ have probability $1/D$. Hypothesis $H_B$ exchanges the two halves. Each sign realization defines a positive state $x=\sqrt Q$ and therefore an admissible amplitude input $q=x$. Writing $r=(\sqrt{3/2}+\sqrt{1/2})/2$, every such state satisfies
\[
 \begin{aligned}
 \ip{x}{x_0}&=\frac{1+r}{2},
 & \norm{x-x_0}^2&=1-r<\frac1{16},\\
 F(H_A)&=\frac r{1+r},
 & F(H_B)&=\frac1{1+r}.
 \end{aligned}
\]
Therefore all inputs belong to $\Nzero$, with target gap $(1-r)/(1+r)>1/64=2\varepsilon_*$ between sign families. Also $\varepsilon_*<1/24=\kappa_0\sqrt{\chi_0}/12$.

Let $L_A,L_B$ be the laws of $k$ basis outcomes after averaging over the signs. Every pair has total Born mass $2/D$, so the pair labels are independent and uniform over $D/2$ possibilities. Let $E$ be the event that no pair is observed twice. Then
\[
 \PP(E^c)\leq\binom{k}{2}\frac{2}{D}.
\]
Consider any ordered sequence of $k$ coordinate outcomes in $E$. Each outcome comes from a different pair. Averaging over the hidden signs gives probability $1/D$ for each coordinate under either hypothesis. Different pairs have independent signs, so the probability of the whole sequence is $(1/D)^k=D^{-k}$ under both hypotheses. $L_A$ and $L_B$ assign the same probability to every sequence without a repeated pair, so they can differ only on sequences in $E^c$. Then,
\[
 \operatorname{TV}(L_A,L_B)\leq\PP(E^c)
 \leq\binom k2\frac2D=\frac{k(k-1)}D.
\]
If an estimator was accurate for both mixtures, thresholding its output at $1/2$ would give a test with average error at most $1/3$. But a test like that requires total variation at least $1/3$, so $k(k-1)/D\geq1/3$. This proves~\eqref{eq:basis-gap}.  Theorem~\ref{thm:main} with $\chi_0=1/4$ gives the dimension-independent upper bound for the two fixed quantum measurements.

\paragraph{Continuum consequence.}
For Lebesgue measure on $[0,1]$, $q_0=1$, and $\phi=\one_{[0,1/2]}$, no finite deterministic position-sample cap works uniformly on $\Nzero$ at $(\varepsilon_*,1/3)$, even under $\sqrt{1/2}\leq q\leq\sqrt{3/2}$. The quantum copy cap remains finite. Indeed, for $D$ divisible by four, partition $[0,1]$ into equal intervals $I_i$ and set $q(t)=\sqrt{DQ_i}$ on $I_i$. Its $L^2$ norm is one, and the bounds, distance, and target calculations above are unchanged. A position sample supplies the cell label and a parameter-independent uniform position inside that cell, so the same testing bound applies. For any finite cap $k$, choosing $D>3k(k-1)$ contradicts uniform success on this fixed continuum neighborhood.

\subsection{Proof of Theorem~\ref{thm:adaptive}: Adaptive allocation}\label{app:pilot}

The adaptive protocol has two tasks. It must estimate the rare-event probabilities that determine the budget, and it must show that the resulting random budget has controlled expectation and tails. We handle these tasks in that order.

For $\sigma>0$, run two independent pilots. The first stops after $r$ observations with $D=1$, and the second stops after $r$ nonzero outcomes of $C$, where
\[
 r=\max\{3,\lceil8\log(8/\beta)\rceil\}.
\]
If their stopping times are $N_D$ and $N_P$, define
\begin{equation}\label{eq:pilot-estimates}
 \begin{aligned}
 \widehat\eta&=r/N_D,& \widehat p&=r/N_P,\\
 \eta_L&=\widehat\eta/2,& p_U&=\min\{1,2\widehat p\},\\
 \chi_U&=\sigma^2p_U/\eta_L^2.
 \end{aligned}
\end{equation}
Use $n=\lceil12/\eta_L+15\chi_U/\varepsilon^2\rceil$ fresh copies per setting in each of
\[
 m_{\rm ad}(\beta)=2\left\lceil9\log(2/\beta)-\tfrac12\right\rceil+1
\]
independent blocks. Return the median of the estimates in \eqref{eq:estimator}. The total count is $K=N_D+N_P+2m_{\rm ad}(\beta)n$. For $\sigma=0$, return the known constant target without copies.

We will use the following inverse-binomial estimates. Let $N$ be the number of independent Bernoulli($u$) trials needed to see $r\geq3$ successes, with $0<u\leq1$, and set $\widehat u=r/N$. Then
\begin{equation}\label{eq:inverse-bounds}
 \begin{aligned}
 \PP\left[\widehat u\notin[u/2,2u]\right]&\leq2e^{-r/8},\\
 \E\left[N\right]&=r/u,\\
 \E\left[N^2\right]&=\frac{r^2+r(1-u)}{u^2},\\
 \E\left[\frac rN\right]&\leq\frac{ru}{r-1}.
 \end{aligned}
\end{equation}
For the upper tail, let $S_j$ be the number of successes in the first $j$ trials. The event $\widehat u>2u$ implies $N<r/(2u)$, hence $S_j\geq r$ at $j=\lceil r/(2u)\rceil-1$. An exponential Markov bound for $S_j$ gives $e^{-r/8}$. 

For the lower tail, $\widehat u<u/2$ implies $N>2r/u$, hence $S_j<r$ at $j=\lfloor2r/u\rfloor$. Applying the corresponding lower-tail bound, equivalently to $2^{-S_j}$, gives another $e^{-r/8}$. This proves the first line of~\eqref{eq:inverse-bounds}.

The first two moments follow by writing $N$ as a sum of $r$ geometric waiting times. For the last moment, when $0<u<1$ use the negative-binomial mass
\[
 \PP(N=j)=\binom{j-1}{r-1}u^r(1-u)^{j-r},
 \qquad j\geq r.
\]
The identity
\[
 \frac{r-1}{j-1}\binom{j-1}{r-1}=\binom{j-2}{r-2}
\]
gives $\E[(r-1)/(N-1)]=u$. Since $r/N\leq r/(N-1)$, the final inequality follows. The case $u=1$ is immediate.

For Theorem~\ref{thm:adaptive}, let $H$ be the event that both pilots are within a factor two. Equation~\eqref{eq:inverse-bounds} gives $\PP(H^c)\leq4e^{-r/8}\leq\beta/2$. On $H$,
\[
 \eta/4\leq\eta_L\leq\eta,\quad p\leq p_U\leq4p,\quad
 \chi\leq\chi_U\leq64\chi,
\]
Therefore,
\begin{equation}\label{eq:explicit-tail}
 K\leq B_q:=\frac{4r}{\kappa^2}
    +2m_{\rm ad}(\beta)\left(1+\frac{48}{\kappa^2}
                  +\frac{960\chi}{\varepsilon^2}\right).
\end{equation}
Condition on the pilot data and on $H$. The fresh blocks then use independent copies, and the conservative bounds above make each block fail with probability at most $1/3$. The median only fails when at least half of the block indicators are one. The same exponential Markov estimate used in the main proof gives here a conditional failure probability of at most $e^{-m_{\rm ad}(\beta)/18}\leq\beta/2$. Combining this with $\PP(H^c)\leq\beta/2$ gives
\[
 \begin{aligned}
 \PP\left[K\leq B_q\right]&\geq1-\beta/2,\\
 \PP\left[K\leq B_q,\abs{\widehat F-\mu}\leq\varepsilon\right]&\geq1-\beta.
 \end{aligned}
\]

For the expectation, $\E[N_D]=r/\eta$, $\E[N_P]=r/p\leq r/\eta$, $\E[N_D^2]\leq r(r+1)/\eta^2$, and $\E[1/\eta_L]=2/\eta$. The two pilots are independent and $p_U\leq2r/N_P$, so
\[
 \begin{aligned}
 \E\left[\chi_U\right]&=\frac{4\sigma^2}{r^2}\E[N_D^2p_U]\\
 &\leq\frac{8\sigma^2}{r^2}\frac{r(r+1)}{\eta^2}\frac{rp}{r-1}\\
 &=8\frac{r+1}{r-1}\chi\leq16\chi.
 \end{aligned}
\]
Therefore
\begin{equation}\label{eq:explicit-expectation}
 \E[K]\leq\frac{2r}{\kappa^2}
       +2m_{\rm ad}(\beta)\left(1+\frac{24}{\kappa^2}
                    +\frac{240\chi}{\varepsilon^2}\right),
\end{equation}
proving the cost bounds.

\subsection{Proof of Proposition~\ref{prop:mesh}: Consistent discretization}\label{app:mesh-proof}
\begin{proof}
Finite measure converts $L^2$ convergence to $L^1$ convergence, and squared $L^2$ norms converge. For the numerator,
\[
\begin{aligned}
 \left|\int q_h\phi_h\,\mathrm d\nu-\int q\phi\,\mathrm d\nu\right|
 &\leq\norm{q_h}_2\norm{\phi_h-\phi}_2\\
 &\quad+\norm{q_h-q}_2\norm\phi_2\longrightarrow0.
\end{aligned}
\]
Since $Z>0$, the ratios converge. Also $\phi_h-\mu_h\to\phi-\mu$ in $L^2$, proving susceptibility convergence. The isometry $\ket i\mapsto\one_{A_{h,i}}/\sqrt{\nu_{h,i}}$ sends the discrete weighted amplitude vector to $q_h/\norm{q_h}_2$. The discrete experiment is thus exactly the experiment on these piecewise-constant functions.
\end{proof}

\subsection{Proof of Proposition~\ref{prop:extra-records}: Retained summaries}\label{app:record-proof}

We prove Proposition~\ref{prop:extra-records} by applying the main theorem to the residual observable for the upper bound and constructing alternatives with exactly the same retained summaries for the lower bound.
\begin{proof}
The proof has two independent parts. For the upper bound, the summaries remove a known part of the target and Theorem~\ref{thm:main} is applied to the residual. For the lower bound, we construct two positive inputs with the same summaries but separated target values.

For the upper bound, let $c_*$ minimize~\eqref{eq:extra-chi} and write $r_*=r_{c_*}$. Set $B=\max(1,\norm{r_*}_\infty)$. The residual $r_*/B$ is bounded, and the identity
\[
 F_\phi(q)=\sum_jc_{*,j}y_j(q)+F_{r_*}(q)
\]
means that the known summaries supply the first term exactly. Apply Theorem~\ref{thm:main} to $r_*/B$ with tolerance $\varepsilon/B$. Multiplying the resulting estimate by $B$ restores the desired tolerance. Rescaling changes neither the required number of copies nor the relevant accuracy condition.

For the lower bound, every summary must take the same value on the two alternatives. Put $p_0=x_0/\ell_0$ and $d=g_{r_*}$, where $g_f=(f-F_f(q_0))/\ell_0$ is the centered sensitivity vector defined in the main text. Orthogonality of the projection gives
\[
 \begin{aligned}
 \norm{d}_2^2&=\chi_{\mathrm{rec},0}, & \ip{x_0}{d}&=0,\\
 \ip{g_{\psi_j}}{d}&=0, & \ip{g_\phi}{d}&=\chi_{\mathrm{rec},0}.
 \end{aligned}
\]
Therefore $d$ is invisible to every retained summary but changes the target. We use its positive and negative parts and add a mass correction so that both perturbed functions remain normalized. Define
\begin{equation}\label{eq:extra-pair}
 \begin{aligned}
 A_\pm&=\int d_\pm\,\mathrm d\nu,&
 u&=\frac{3\varepsilon}{\ell_0\chi_{\mathrm{rec},0}},\\
 p_\pm&=(1-uA_\pm)p_0+u d_\pm.
 \end{aligned}
\end{equation}
The correction makes both $p_+$ and $p_-$ have unit mass. Their difference is
\[
 p_+-p_-=u\bigl(d-(A_+-A_-)p_0\bigr).
\]
Using $\int d\,\mathrm d\nu=A_+-A_-$, direct substitution gives, for any bounded $f$,
\[
 F_f(p_+)-F_f(p_-)=u\ell_0\ip{g_f}{d}.
\]
The orthogonality relations make all retained summaries equal, while the target values differ by $3\varepsilon$. The accuracy assumption implies $uA_\pm\leq3/32$, so both densities remain positive. Their normalized amplitude rays have the form $x_0+t_\pm d_\pm$, with
\[
 t_\pm=\frac{u\ell_0}{1-uA_\pm}
 \leq\frac{96\varepsilon}{29\chi_{\mathrm{rec},0}}.
\]
Although the two values of $t_\pm$ don't necessarily coincide, the resulting rays are still close. Since
$\ip{x_0}{d}=0$, while $d=d_+-d_-$ and $d_+,d_-$ have disjoint supports,
$\ip{x_0}{d_+}=\ip{x_0}{d_-}$ and $\ip{d_+}{d_-}=0$. Radial projection onto the unit ball is nonexpansive, so
with $t_{\max}=\max(t_+,t_-)$,
\[
 \begin{aligned}
 \norm{x_{p_\pm}-x_0}_2&\leq t_\pm\norm{d_\pm}_2,\\
 \norm{x_{p_+}-x_{p_-}}_2
 &\leq t_{\max}\bigl(\norm{d_+}_2+\norm{d_-}_2\bigr)\\
 &\leq \frac{96\sqrt{2}\,\varepsilon}
              {29\sqrt{\chi_{\mathrm{rec},0}}}.
 \end{aligned}
\]
Consequently, for $f=\abs{\ip{x_{p_+}}{x_{p_-}}}^2$,
\[
 1-f\leq\norm{x_{p_+}-x_{p_-}}_2^2
 \leq 2\left(\frac{96}{29}\right)^2
       \frac{\varepsilon^2}{\chi_{\mathrm{rec},0}}<22\frac{\varepsilon^2}{\chi_{\mathrm{rec},0}}.
\]
The accuracy range implies $\varepsilon^2/\chi_{\mathrm{rec},0}\leq1/1024$, so
$-\log f\leq(1-f)/f<25\varepsilon^2/\chi_{\mathrm{rec},0}$. Thus
\[
 \begin{aligned}
 \norm{x_{p_\pm}-x_0}_2&\leq3\kappa_0/29<\kappa_0/4,\\
 -\log\abs{\ip{x_{p_+}}{x_{p_-}}}^2
 &<25\varepsilon^2/\chi_{\mathrm{rec},0}.
 \end{aligned}
\]
Therefore both alternatives lie in $\Nzero$, and~\eqref{eq:helstrom} gives the lower bound. The common summaries need not equal $y(q_0)$, because the guarantee varies over the whole neighborhood. If the residual is constant, the target is an affine function of the retained summaries, which proves the zero-copy sufficiency. Conversely, if $\chi_{\mathrm{rec},0}>0$, the same equal-summary construction is possible for every sufficiently small positive $\varepsilon$. The target therefore differs accross states that have identical summaries, so exact zero-copy recovery throughout $\Nzero$ is impossible.
\end{proof}

\subsection{Proof of Theorem~\ref{thm:queries}: Coherent query complexity}\label{app:coherent-proof}

\begin{proof}
We prove the two directions separately. The upper bound estimates the two overlaps that determine the target ratio. The lower bound shows that nearby preparation unitaries remain indistinguishable to any short-query procedure.

For the upper bound, let $d$ be the centered sensitivity direction from~\eqref{eq:rd} and set
$e=d/\sqrt{\chi_0}$. Write
\[
 c=\ip{v}{x},\qquad z'=\ip{e}{x}.
\]
For every $x=x_q$ in $\Nzero$, the neighborhood estimates give
\[
 c\geq\frac{3\kappa_0}{4},\qquad
 |z'|\leq\frac{\kappa_0}{4},
\]
and the target has the exact representation
\[
 \mathcal F(x)-\mu_0
 =\kappa_0\sqrt{\chi_0}\,\frac{z'}{c}.
\]

Hadamard interference converts a real overlap $s$ into the probability $(1+s)/2$. Amplitude estimation~\cite[Theorem~12]{bhmt} therefore estimates both $c$ and $z'$ to additive accuracy
\[
 \delta=\frac{\varepsilon}{16\sqrt{\chi_0}}
\]
with failure probability at most $1/6$ each, using $O(1/\delta)$ controlled preparation and inverse queries. Denote the estimates by $\widehat c$ and $\widehat z'$, set
\[
 \widetilde c=\max\!\left\{\frac{\kappa_0}{2},\widehat c\right\},
\]
and output
\[
 \mu_0+\kappa_0\sqrt{\chi_0}\,\frac{\widehat z'}{\widetilde c}.
\]
On the event that both overlap estimates are accurate, the accuracy assumption gives $\delta\leq\kappa_0/192$, so $\widehat c\geq3\kappa_0/4-\delta>\kappa_0/2$. The ratio error is then bounded by
\[
 \left|\frac{\widehat z'}{\widetilde c}-\frac{z'}c\right|
 \leq \frac{2}{\kappa_0}\left(\delta+\frac{|z'|}{c}\delta\right).
\]
Using $|z'|/c\leq1/3$ gives
\[
 \left|\widehat{\mathcal F}-\mathcal F(x)\right|
 \leq 2\sqrt{\chi_0}\left(\delta+\frac{|z'|}{c}\delta\right)
 \leq \frac{8}{3}\sqrt{\chi_0}\,\delta
 <\varepsilon.
\]
Thus a constant number of amplitude-estimation runs gives the required constant-confidence estimate with $O(\sqrt{\chi_0}/\varepsilon)$ queries. Independent repetition followed by a median increases the success probability to $1-\beta$ and multiplies the query count by $O(\log(1/\beta))$.

For the lower bound, use the positive alternatives $x_+$ and $x_-$ from~\eqref{eq:positive-pair}. They have target values separated by more than $2\varepsilon$. Choose a known unitary $U_0$ with $U_0\ket 0=x_0$. For each sign, let $R_\pm$ be the real rotation in the plane spanned by $x_0$ and $x_\pm$ that maps $x_0$ to $x_\pm$ and acts as the identity on the orthogonal complement. Define
\[
 U_\pm=R_\pm U_0.
\]
The first columns of $U_+$ and $U_-$ are the two hard states. A plane rotation fulfills
$\|R_\pm-I\|_{\rm op}=\|x_\pm-x_0\|$, so:
\[
\begin{aligned}
 \|U_+-U_-\|_{\rm op}
 &\leq \|x_+-x_0\|+\|x_--x_0\| \\
 &\leq t\left(\|d_+\|+\|d_-\|\right)
 \leq \frac{3\sqrt2\,\varepsilon}{\sqrt{\chi_0}}.
\end{aligned}
\]
The same bound holds for controlled queries and inverse queries. A standard telescoping hybrid argument~\cite[Section~IV~A]{montanaro} bounds the trace distance between the final states of any $T$-query procedure by
\[
 T\,\|U_+-U_-\|_{\rm op}.
\]
Measurements and auxiliary computations can be put at the end, so that this applies to arbitrary coherent query algorithms.

If the procedure succeeds with probability at least $1-\beta$ on both alternatives, thresholding its estimate at the midpoint of the two target values gives a binary test with error at most $\beta$. Therefore, the trace distance must be at least $1-2\beta$. Then:
\[
 T\geq
 \frac{(1-2\beta)\sqrt{\chi_0}}{3\sqrt2\,\varepsilon}.
\]
At $\beta=1/3$ this matches the upper bound.
\end{proof}

\section{Extensions and supporting material}\label{app:supporting}

This appendix gives one restricted-family readout theorem, translates the
susceptibility into several standard numerical representations, and records
the corresponding ratio result for signed and complex amplitudes. The
transport proof exhibits admissible hard alternatives generated by a physical
update rule. The dictionaries are interpretive applications of the main
theorem, not additional lower bounds.

\subsection{Transport application}\label{app:applications}

We specialize the two-stream radiative-transfer update of Igarashi--Kadowaki--Kawabata~\cite[Eqs.~(3) and (8)]{igarashi}. Set the absorption and source terms to zero, take the scattering coefficient to be $2\lambda$, and set the mesh and time steps to $1/T$, in units with stream speed one. The collision-and-streaming update is
\begin{equation}\label{eq:transport-update}
 \begin{aligned}
 q^{n+1}_d(j+d)&=(1-\lambda/T)q^n_d(j)\\
 &\quad+(\lambda/T)q^n_{-d}(j),\qquad d\in\{-1,1\}.
 \end{aligned}
\end{equation}
For integers $T\geq4$, $1\leq m\leq T/4$, initialize a uniform, unit-mass pulse of width $m$ moving to the right on a periodic grid with $M\geq4T+m$ cells and let $q_\lambda$ be the field after $T$ steps, for $\lambda\in[1/4,1/2]$. We use counting measure on the known reachable support, which has $W=\Theta(T)$ labels. The field $q_\lambda$ is positive on this support for every $\lambda$ in the interval. The physical pulse width is $\delta=m/T$.

The target is the backward-stream energy fraction, with $\phi(j,d)=\one_{\{d=-1\}}$. Mass is conserved, and summing the two-stream recurrence gives
\begin{equation}\label{eq:transport-target}
 \mu_T(\lambda)=\sum_jq_{\lambda,-}(j)
       =\frac{1-(1-2\lambda/T)^T}{2}.
\end{equation}
\begin{theorem}[Pulse readout]\label{thm:transport}
Let $\mathcal C_{T,m}=\{q_\lambda:1/4\leq\lambda\leq1/2\}$ in \eqref{eq:transport-update}, and let $x_\lambda=q_\lambda/\norm{q_\lambda}_2$. Uniformly over this family,
\begin{equation}\label{eq:transport-chi}
 \begin{aligned}
 \kappa^{-2}&=\Theta(T/m),\\
 \chi_\phi(q_\lambda)&=\Theta(T/m),\\
 \frac{\mu_T'(\lambda)^2}{\norm{x'_\lambda}_2^2}
 &=\Theta(T/m).
 \end{aligned}
\end{equation}
For $0<\varepsilon\leq1/48$ and $0<\beta\leq1/3$,
\begin{equation}\label{eq:transport-rate}
 K^*(\varepsilon,\beta;\mathcal C_{T,m})
 =\Theta\!\left(\frac{T}{m\varepsilon^2}\log\frac1\beta\right).
\end{equation}
The known data are $T,m,M$, the initial pulse, the coefficient interval, and the target.
\end{theorem}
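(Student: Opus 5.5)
The plan is to reduce everything to explicit, uniform-in-$\lambda$ estimates on the field $q_\lambda$, and then to run the two halves of the argument exactly as in Theorem~\ref{thm:main}. The achievability half uses Lemma~\ref{lem:public} with median amplification. The converse half uses a hard pair inside the one-parameter family, combined with~\eqref{eq:helstrom}.

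\textbf{Step 1: decomposition of the field.} Expand the update~\eqref{eq:transport-update} as a sum over scattering histories. The never-scattered part is the original pulse translated by $T$ cells, with per-cell value $(1-\lambda/T)^T/m=\Theta(1/m)$ uniformly for $\lambda\in[1/4,1/2]$, on $m$ cells. The remaining unit-order mass (scattered at least once) is spread over the reachable support of $W=\Theta(T)$ labels. I would show that its density is $\Theta(1/T)$ pointwise on that support, apart from the $O(m)$ cells next to the pulse, where it is still $O(1/T)$.
\begin{itemize}
\item For the upper bound on the density, bound the one-scattering term directly, since it is $\le (\lambda/T)\cdot$(mass)$\cdot$(number of scattering times landing in a cell) $=O(1/T)$. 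Then sum the geometric tail in the number of scatterings.
\item For the lower bound, use only the one- or two-scattering histories, which already reach every support cell with probability $\Theta(1/T)$.
\end{itemize}
This also gives strict positivity on the support. I expect this bookkeeping, with constants uniform in $\lambda$, $m\le T/4$, and boundary cells of the support, to be the main obstacle. I would treat it by explicit counting of one-scatter paths plus a crude bound on the multiple-scatter remainder.

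\textbf{Step 2: the three quantities in~\eqref{eq:transport-chi}.} With $Z=1$ and $S=\Theta(1/m)+\Theta(T\cdot T^{-2})=\Theta(1/m)$, we get $\kappa^{-2}=WS=\Theta(T/m)$. By~\eqref{eq:transport-target}, $\mu_T(\lambda)$ stays in a fixed compact subinterval of $(0,1)$ (roughly $[0.19,0.32]$), and both stream labels occupy $\Theta(T)$ cells. Hence Proposition~\ref{prop:chi} gives
\[
\chi=S\int(\phi-\mu)^2\,\mathrm d\nu=\Theta(1/m)\Theta(T)=\Theta(T/m).
\]
For the third quantity, note that $\mu_T'(\lambda)=(1-2\lambda/T)^{T-1}=\Theta(1)$. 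By~\eqref{eq:derivatives}, $\norm{x_\lambda'}^2=\Var_Q(h)$ with $h=\partial_\lambda\log q_\lambda$.
\begin{itemize}
\item On the pulse, $h=-T/(T-\lambda)\approx-1$.
\item On the diffuse part, $h$ is bounded and at least, say, $-1/2$, because the scattered mass increases with $\lambda$. I would check this from the one-scatter term, which is proportional to $\lambda$.
\end{itemize}
Since the diffuse part carries $Q$-mass $\Theta(m/T)$ and $h$ is bounded everywhere, $\Var_Q(h)=\Theta(m/T)$. This gives the third line of~\eqref{eq:transport-chi}.

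\textbf{Step 3: the rate~\eqref{eq:transport-rate}.}

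For the upper bound, run the fixed settings $C,D$ of~\eqref{eq:main-settings}. Lemma~\ref{lem:public} with the uniform bounds $\kappa^{-2},\chi\le cT/m$ gives block failure at most $1/3$ once $n=O(T/m+T/(m\varepsilon^2))=O(T/(m\varepsilon^2))$. Median amplification exactly as in the proof of Theorem~\ref{thm:main} then supplies the $\log(1/\beta)$ factor. This needs only the known $T,m$, with no reference neighborhood.

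For the lower bound, use that $\mu_T$ is monotone with derivative $\Theta(1)$ and that $\mu_T(1/2)-\mu_T(1/4)>2/48\ge2\varepsilon$. These let us pick $\lambda<\lambda'$ in the interval with $\lambda'-\lambda=\Theta(\varepsilon)$ and $|\mu_T(\lambda')-\mu_T(\lambda)|>2\varepsilon$. Then
\[
1-f\le\norm{x_{\lambda'}-x_\lambda}^2\le(\lambda'-\lambda)^2\sup\norm{x'}^2=O(\varepsilon^2m/T),
\]
which is small, so $-\log f=O(\varepsilon^2 m/T)$. Inserting this into~\eqref{eq:helstrom} (equivalently Proposition~\ref{prop:modulus} with this pair) yields $K^*=\Omega\bigl(T\varepsilon^{-2}m^{-1}\log(1/\beta)\bigr)$ for $\beta\le1/3$, matching the upper bound.
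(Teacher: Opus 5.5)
Your architecture matches the paper's: uniform scalings of $S$, $W$, $\mu_T$ and $\mu_T'$, then Lemma~\ref{lem:public} with a median for achievability, and a pair with $\lambda'-\lambda=\Theta(\varepsilon)$ fed into~\eqref{eq:helstrom} for the converse. The gap is in Step~1. It propagates into the one estimate that carries the weight, $\sup_\lambda\norm{x'_\lambda}_2^2=O(m/T)$, which gives the lower bound in the third line of~\eqref{eq:transport-chi} and the fidelity bound for your hard pair.

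The scattered density is \emph{not} $\Theta(1/T)$ pointwise on the reachable support. It degenerates away from the pulse, at the ends of the streams, not in the $O(m)$ cells next to it. For a single site, the two-change histories ending at a forward label with $B$ backward steps number only $T-B$, and the width-$m$ average loses a further factor at the edges. Concretely, with the pulse on $\{0,\dots,m-1\}$, the extreme forward label $-T+2$ is reached only from $j_0=0$, by the single path that changes at steps $1$ and $T$. There $q_\lambda=\frac1m(\lambda/T)^2(1-\lambda/T)^{T-2}=\Theta(1/(mT^2))$.

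So your claim that $h=\partial_\lambda\log q_\lambda$ is bounded on the diffuse part cannot come from dividing an $O(1/T)$ derivative by a $\Theta(1/T)$ density. You give no other argument for the upper side, and the one-change term you cite is absent on the forward stream. Boundedness is in fact true: on purely diffuse cells $h_i=\frac{T}{T-\lambda}(\bar k_i/\lambda-1)$, where $\bar k_i$ is the path-weighted mean number of changes, and $\bar k_i=O(1)$ by counting extra pairs of changes. But that is a separate argument you would have to supply.

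The cleaner repair, which is the paper's route, avoids pointwise log-derivatives. Write $q_\lambda=a(f_m+v_{\lambda,m})$ with $a=(1-\lambda/T)^T$, and compare $h$ with the constant $c=\partial_\lambda\log a$. Then $q_i(h_i-c)=a(v'_{\lambda,m})_i$, and the pulse alone gives $S\geq a^2/m$, so
\[
 \norm{x'_\lambda}_2^2=\Var_Q(h)\leq\E_Q\bigl[(h-c)^2\bigr]
 =\frac{a^2\norm{v'_{\lambda,m}}_2^2}{S}
 \leq m\,\norm{v'_{\lambda,m}}_\infty\norm{v'_{\lambda,m}}_1 .
\]
This needs only \emph{upper} bounds on $v'$. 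They follow by differentiating the coefficientwise path majorant $u(1+u)^T$ with $u=\lambda/(T-\lambda)$, giving $\norm{v'}_\infty\leq8/T$ and $\norm{v'}_1\leq3$.

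For the other direction of the third line, use $\mu_T'^2\leq\chi\,\norm{x'_\lambda}_2^2$ (Cauchy--Schwarz from Proposition~\ref{prop:chi}) instead of a lower bound on $\Var_Q(h)$. With these two changes nothing in your plan needs a pointwise lower density bound. The pulse gives $S\geq a^2/m$. $\chi$ needs only the per-stream label counts and $\mu_T$ bounded away from $0$ and $1$. Positivity on the reachable support is automatic.
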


At fixed physical width, mesh refinement does not change this order. The component that doesn't change has mass $a_\lambda=(1-\lambda/T)^T=\Theta(1)$, whereas the conditionally normalized changed-path state $r_\lambda$ has $\chi_\phi(r_\lambda)=O(1)$. If the interface supplies $a_\lambda$ and copies of this separately normalized diffuse state, the factor $T/m$ disappears.In this family, exact $a_\lambda$ alone determines $\lambda$ and the target in~\eqref{eq:transport-target}.

\begin{proof}[Proof of Theorem~\ref{thm:transport}]
We prove the theorem in four steps. We first bound the diffuse paths produced by a single-site pulse. We then average over the width-$m$ initial pulse and bound the derivative of the normalized state. These estimates give the susceptibility scaling. Finally, two nearby values of $\lambda$ give the finite-sample lower bound, and we compare with the separately normalized diffuse component.

\paragraph{Single-site path expansion.}
The update~\eqref{eq:transport-update} describes a walk in which the stream changes independently with probability $\lambda/T$ before each step. First take a pulse at a single site. Let $b$ be its no-change endpoint, put $a=(1-\lambda/T)^T$, and write
\[
 q_\lambda=a(e_b+v_\lambda),\qquad
 (v_\lambda)_i=\sum_{k=1}^T N_{i,k}u^k,\qquad
 u=\frac{\lambda}{T-\lambda}.
\]
Here $N_{i,k}$ counts paths ending at label $i$ with exactly $k$ changes. Once the first $k-1$ change times are fixed, the endpoint determines the last change time, so $N_{i,k}\leq\binom{T}{k-1}$. Summing over $k$ gives the coefficientwise majorant $u(1+u)^T$. All coefficients are nonnegative, so differentiating this majorant also bounds $v'_{\lambda}$ coefficientwise. Since $a\geq1-\lambda\geq1/2$, $(1+u)^T=a^{-1}\leq2$, $Tu\leq1$, and $u'=T/(T-\lambda)^2\leq2/T$,
\begin{align*}
 0\leq(v_\lambda)_i
   &\leq u(1+u)^T\leq2/T,\\
 0\leq(v'_\lambda)_i
   &\leq u'(1+u)^T+Tuu'(1+u)^{T-1}\leq8/T,\\
 \sum_i(v_\lambda)_i&=a^{-1}-1,\qquad
 \sum_i(v'_\lambda)_i=\frac{a^{-1}}{1-\lambda/T}\leq3.
\end{align*}
The derivative bound is explicit:
\[
 u'\bigl[(1+u)^T+Tu(1+u)^{T-1}\bigr]
 \leq\frac2T(2+2)=\frac8T,
\]
and $\sum_i v'_{\lambda,i}\leq2/(1-1/8)=16/7<3$. The second line of the display is exactly the derivative of the polynomial majorant.

\paragraph{Width-$m$ pulse and normalized derivative.}
Convolving with the uniform width-$m$ initial distribution gives $q_\lambda=a(f_m+v_{\lambda,m})$. Averaging translates preserves the $\ell^1$ bound and cannot increase the $\ell^\infty$ bound, so
\[
 \norm{v'_{\lambda,m}}_2^2
 \leq\norm{v'_{\lambda,m}}_\infty
       \norm{v'_{\lambda,m}}_1
 \leq\frac8T\,3=\frac{24}{T}.
\]
Because $f_m$ is nonnegative and $\norm{f_m}_2^2=1/m$,
\[
 \begin{aligned}
 \norm{f_m+v_{\lambda,m}}_2^2&\geq1/m,
 &\norm{v'_{\lambda,m}}_2^2&\leq24/T,\\
 \norm{x'_\lambda}_2^2&\leq24m/T.
\end{aligned}
\]
The scalar factor $a$ cancels after normalization. Differentiating the normalized vector projects $v'_{\lambda,m}/\norm{f_m+v_{\lambda,m}}_2$ onto the tangent space, which gives the displayed bound on $\norm{x'_\lambda}_2^2$.

\paragraph{Susceptibility and restricted hard pair.}
For every $\lambda$, each stream has $\Theta(T)$ reachable labels. One change yields $T$ distinct backward labels, while zero or two changes give $T$ distinct forward labels. The union generated by a width-$m$ pulse remains in an interval of length at most $2T+m$ per stream. All have positive probability. Since $Z=1$,
\[
 \frac1{4m}\leq S=\norm{q_\lambda}_2^2
 \leq\norm{q_\lambda}_\infty\leq\frac1m+\frac2T\leq\frac3{2m}.
\]
Also $\mu_T$ stays in a fixed compact subinterval of $(0,1)$, and
\[
 \mu_T'(\lambda)=(1-2\lambda/T)^{T-1}\in[e^{-1},1].
\]
Consequently $\kappa^{-2}=WS=\Theta(T/m)$ and $\chi=S\sum_i(\phi_i-\mu_T)^2=\Theta(T/m)$. The inequality $|\mu_T'|^2\leq\chi\norm{x'_\lambda}_2^2$ gives the lower bound on the restricted susceptibility. The uniform lower bound on $|\mu_T'|$ together with $\norm{x'_\lambda}_2^2\leq24m/T$ gives the matching upper bound. This proves all three scalings in~\eqref{eq:transport-chi}.

\paragraph{Finite-sample lower and upper bounds.}
For the finite-sample lower bound choose $\lambda_\pm=3/8\pm3\varepsilon$. These lie in $[1/4,1/2]$ for the stated accuracy range, and their targets differ by at least $6\varepsilon/e>2\varepsilon$. The squared trace distance obeys
\[
 \begin{aligned}
 1-f&\leq\norm{x_{\lambda_+}-x_{\lambda_-}}_2^2
 \leq864\frac mT\varepsilon^2\leq\frac3{32},\\
 f&=\abs{\ip{x_{\lambda_+}}{x_{\lambda_-}}}^2.
 \end{aligned}
\]
Therefore $-\log f\leq1728(m/T)\varepsilon^2$. For two pure states with squared fidelity $f$, Helstrom's bound gives the same binary-testing inequality as~\eqref{eq:helstrom}, so
\[
 K^*\geq\frac{T}{1728m\varepsilon^2}
              \log\frac1{4\beta(1-\beta)}.
\]
For $0<\beta\leq1/3$ this logarithm is comparable to $\log(1/\beta)$. For the upper bound, apply Lemma~\ref{lem:public}: the uniform bounds $\kappa^{-2},\chi=O(T/m)$ make a block of $O(T/(m\varepsilon^2))$ copies succeed with probability at least $3/4$ on the whole family. A median of $O(\log(1/\beta))$ independent blocks gives the required confidence. The family need not lie in a single neighborhood~\eqref{eq:neighborhood}.

\paragraph{Alternative component interface.}
The conditionally normalized diffuse component is $r_\lambda=a v_{\lambda,m}/(1-a)$. Since $1-a\geq1-e^{-1/4}$, its largest entry is $O(T^{-1})$, and its support has $O(T)$ labels. It has unit mass, so $\norm{r_\lambda}_2^2\leq\norm{r_\lambda}_\infty\norm{r_\lambda}_1=O(T^{-1})$. Multiplying by the support size gives $\kappa_r^{-2}=O(1)$, and the same estimate gives $\chi_\phi(r_\lambda)=O(1)$ for the bounded target. Thus the same estimator has dimension-independent cost once the component mass is supplied. An additive $O(\varepsilon)$ error in that mass contributes only $O(\varepsilon)$ to the final target error. In this particular one-parameter family, however, the mass $a_\lambda$ alone already identifies $\lambda$ and hence the target. This is an interface comparison, not a claim that component separation is necessary when that scalar is available.
\end{proof}

\subsection{Numerical representations}\label{app:dictionaries}

These short pieces map common numerical outputs to the notation of the
paper.

\subsubsection{Modal records for linear systems}
To quantify this reduction, let $N\geq2$ and choose an orthonormal basis $h_1,\ldots,h_d$ of $x_0^\perp$, where $d=N-1$. The observables $\psi_j=\ell_0h_j$ have sensitivity vectors $g_{\psi_j}=h_j$. For the particular target $\phi$, order these directions by decreasing $a_j^2=\abs{\ip{g_\phi}{h_j}}^2$. This gives:
\begin{equation}\label{eq:ordered-record-tail}
 \chi_{\mathrm{rec},0}^{(k)}
   =\sum_{j>k}a_j^2
   \leq\left(1-\frac{k}{N-1}\right)\chi_0.
\end{equation}
For $\abs{\phi_i}\leq1$, positivity gives $S_0\leq Z_0^2$ and $\chi_0\leq4N$. Proposition~\ref{prop:extra-records} implies that the copy budget for the first $k$ records is:
\begin{equation}\label{eq:ordered-record-cost}
 K_{\mathrm{rec}}^{*,(k)}
   =O\!\left(\frac{N-1-k}{\varepsilon^2}\log\frac1\beta\right),
 \qquad k<N-1,
\end{equation}
This bound holds when $\chi_{\mathrm{rec},0}^{(k)}>0$ and
$\varepsilon\leq\kappa_0\sqrt{\chi_{\mathrm{rec},0}^{(k)}}/32$.
At $k=N-1$, the records determine the target exactly.

\subsubsection{Density evolution and mixing}
For a mass-conserving Fokker--Planck density $p_T$ on a finite-volume domain, set $q=p_T$, use volume measure, and choose a bounded regional or terminal observable $\phi$. Then $Z=1$ and $S=\norm{p_T}_2^2$. Uniform density bounds control the representation factor, but can depend on physical dimension, time, and diffusivity even when grid-independent.

For a positive concentration $q=c(x,T)$ with mass $M$, use $Z=M$ and $S=\norm{c}_2^2$. Periodic diffusion with divergence-free advection and conserved mass satisfies $d\norm{c}_2^2/dt=-2D\norm{\nabla c}_2^2\leq0$. These observables include mass fractions and scaled centers of concentration.

\subsubsection{Markov probabilities}

For a continuous-time Markov chain with probabilities $p_i(T)>0$ on $N$ states, choose $q_i=p_i(T)$, $\nu_i=1$, and a bounded reward $\phi_i$. Then $Z=1$, $S=\sum_i p_i^2$, and
\[
 \kappa^{-2}=NS,\qquad \chi=S\sum_i(\phi_i-\mu)^2.
\]
A distribution concentrated on $m\ll N$ states can give $\kappa^{-2}\asymp N/m$.

\subsubsection{Terminal payoffs}
For terminal-density pricing on a finite domain $D$, take $q=p_T$, $\nu$ equal to volume, and $\phi=g/B$ for $\abs{g}\leq B$. With $Z_D=\int_Dp_T$ and $S_D=\int_Dp_T^2$, $\chi=(S_D/Z_D^2)\int_D(\phi-\mu_D)^2\,\mathrm d\nu$. The truncated discounted payoff is $e^{-r_fT}BZ_D\mu_D$. The forward-density algorithm in~\cite{guseynov} uses payoff and uniform-reference overlaps to recover the price, with sampling-error factors that depend on those overlaps.

\subsubsection{Elliptic solutions}
Positive elliptic solutions describe normalized spatial averages when physically appropriate. For $-q''=2$ on $(0,1)$ with zero boundary values, $q=x(1-x)>0$ almost everywhere. With $\phi=x$ and Lebesgue measure,
\[
 Z=1/6,\quad S=1/30,\quad\kappa=\sqrt{5/6},\quad\chi=1/10.
\]

\subsubsection{Linear systems}
For an inverse-positive linear system $q=A^{-1}b>0$,
\[
 F_\phi(q)=\frac{(\nu\odot\phi)^TA^{-1}b}{\nu^TA^{-1}b}.
\]
Matrix condition number and readout susceptibility are not the same, as we discussed earlier. $A=I$ allows every positive $q$, while an ill-conditioned system can have a uniform solution.

\subsubsection{Finite-element coefficients}
For FEM, distinguish the physical $L^2$ norm from the Euclidean norm of encoded coefficients. If $u_h=\sum_i c_i\psi_i$, $c_i>0$, and the basis functions are nonnegative with $b_i=\int\psi_i>0$, an exact discrete translation for $\abs{r}\leq1$ is
\[
 \nu_i=b_i^2,\qquad q_i=c_i/b_i,\qquad
 \phi_i=\frac{\int r\psi_i}{b_i}.
\]
Then $\abs{\phi_i}\leq1$, $x_q=c/\norm{c}_2$, and $F_\phi(q)=\int r u_h/\int u_h$. The unnormalized functional in~\cite{montanaro} additionally uses recovered scale and coherent access.

\subsection{Signed and complex amplitude ratios}
\label{app:general-ratio}

This is an extension of the positive-real theorem, by relaxing the positivity assumption. The same ratio geometry applies to signed and complex amplitudes, including zeros, sign changes, and spatially varying phases, but the target and error are now complex-valued. The local copy bound requires a nonzero denominator at the reference state.

Let $H$ be a separable real or complex Hilbert space, with inner product conjugate-linear in its first slot. For known $a,b\in H$, $a\ne0$, consider, on the domain $\ip{a}{q}\ne0$,
\begin{equation}\label{eq:general-ratio-chi}
 \begin{aligned}
 F(q)&=\frac{\ip{b}{q}}{\ip{a}{q}},\\
 \chi_{a,b}(q)&=
 \frac{\norm{q}^{2}\norm{b-\overline{F(q)}a}^{2}}
      {\abs{\ip{a}{q}}^{2}}.
 \end{aligned}
\end{equation}
Both quantities are invariant under nonzero rescaling, including global sign or phase. For signed $q$, replacing the linear denominator by $\norm{q}_1$ would destroy invariance under global sign changes, making that target generally unidentifiable from the state alone. When $H=L^2(\nu)$ on a finite-measure space, setting $a=\one$ and real $b=\phi$ recovers the functional and susceptibility of the positive experiment.

Put $v=a/\norm{a}$, $w=b/\norm{a}$, and $x=q/\norm{q}$. The residual $r=w-\overline{F(x)}v$ obeys $\ip{r}{x}=0$, and
\[
 dF_x(u)=\frac{\ip{r}{u}}{\ip{v}{x}},\qquad
 \sup_{u\perp x,\,\norm{u}=1}\abs{dF_x(u)}^2=\chi_{a,b}(x).
\]
Therefore $\chi$ is the squared norm of the  derivative  over pure-state tangent directions orthogonal to $x$.

Choose a known unit center $x_0$ with $\kappa_0=\abs{\ip{v}{x_0}}>0$ and $\chi_0=\chi_{a,b}(x_0)>0$. Fix its phase so that $\ip{v}{x_0}=\kappa_0$, and define the neighbourghood
\begin{equation}\label{eq:general-ratio-ball}
 \mathcal N^{\rm rat}_0=
 \left\{[x]:\norm{x}=1,\quad
 \min_{\abs{\zeta}=1}\norm{\zeta x-x_0}\leq\kappa_0/4\right\}.
\end{equation}
Let $K^*_{a,b}$ be the smallest copy count under which one procedure can estimate $F(x)$ to error at most $\varepsilon$ with failure probability at most $\beta$.

\begin{theorem}[General amplitude-ratio readout]\label{thm:general-ratio}
If $0<\varepsilon\leq\kappa_0\sqrt{\chi_0}/12$ and $0<\beta\leq1/3$, then
\begin{equation}\label{eq:general-ratio-bounds}
 \begin{split}
 \frac{\chi_0}{10\varepsilon^2}
       \log\frac{1}{4\beta(1-\beta)}
 &\leq K^*_{a,b}(\varepsilon,\beta;\mathcal N^{\rm rat}_0)\\
 &\leq s m_{\rm rat}(\beta)\left\lceil\frac{256\chi_0}{\varepsilon^2}\right\rceil,
 \end{split}
\end{equation}
where $s=2$ for real $H$ and $s=3$ for complex $H$, and $m_{\rm rat}(\beta)$ is the smallest odd integer at least $18\log(2/\beta)$. Consequently $K^*_{a,b}(\varepsilon,\beta;\mathcal N^{\rm rat}_0) =\Theta(\chi_0\varepsilon^{-2}\log(1/\beta))$.
\end{theorem}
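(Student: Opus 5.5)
The plan mirrors the proof of Theorem~\ref{thm:main}, with the positive ratio $\ip{v}{\Phi x}/\ip{v}{x}$ replaced by $\ip{w}{x}/\ip{v}{x}$. The positive-pair construction is replaced by a single complex/real direction. The target and $\chi$ are invariant under the phase $\zeta$, so we pick each representative with $\norm{x-x_0}\le\kappa_0/4$.

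\emph{Uniform control over the neighborhood.} Set $r=v/\kappa_0$ and $d=(w-\overline{F_0}v)/\kappa_0$. Then $\ip{r}{x_0}=1$, $\ip{d}{x_0}=0$ and $\norm{d}^2=\chi_0$. For $x$ in $\mathcal N^{\rm rat}_0$ this gives $\abs{\ip{r}{x}-1}\le1/4$, $\abs{\ip{d}{x}}\le\kappa_0\sqrt{\chi_0}/4$, $F(x)-F_0=\ip{d}{x}/\ip{r}{x}$, and hence $\abs{F-F_0}\le\kappa_0\sqrt{\chi_0}/3$. Repeating the triangle-inequality argument of \eqref{eq:chi-comparison} on $\norm{w-\overline{F(x)}v}$ then yields $\abs{\ip{v}{x}}\ge3\kappa_0/4$ and $\chi(x)\le(16/9)^2\chi_0$.

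\emph{Upper bound.} Take the known $z=w-\overline{F_0}v$. Since $z$ need not be orthogonal to $v$, I would use $z-\ip{v}{z}v$ and add back the known constant. The numerator $\ip{z}{x}\ip{x}{v}$ is estimated from Hermitian rank-two observables.
\begin{itemize}
\item $C_R=(\ket v\bra z+\ket z\bra v)/2$ gives the real part.
\item In complex $H$, $C_I=(i\ket v\bra z-i\ket z\bra v)/2$ gives the imaginary part; this is why $s=3$.
\item $D=\ket v\bra v$ gives $\abs{\ip{v}{x}}^2$.
\end{itemize}
Then $F(x)=F_0+\overline{\ip{x}{z}\ip{v}{x}}/\abs{\ip{v}{x}}^2$, or the corresponding conjugation. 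The second moments are bounded by $\norm{C x}^2\le(\abs{\ip{v}{x}}^2\norm{z}^2+\abs{\ip{z}{x}}^2)/4$, which is controlled by the uniform bounds above; centering at $F_0$ rather than at $F(x)$ costs only a constant. From here I would redo Lemma~\ref{lem:public} for a complex ratio, with a Chebyshev event for $\overline D\ge M/2$ and a linearized error $V$. The constants inflate from $48$ to $256$. A median over $m_{\rm rat}(\beta)$ blocks, taken separately on the real and imaginary parts or as a coordinatewise median with $\beta/2$ each, accounts for the $\log(2/\beta)$.

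\emph{Lower bound.} Positivity is no longer required, so no splitting into $d_\pm$ is needed. I would take $x_\pm=(x_0\pm t d)/\norm{x_0\pm td}$ with $t=3\varepsilon/\chi_0$, or with $t$ slightly adjusted. Both lie in the ball because normalization is a contraction and $t\norm{d}=3\varepsilon/\sqrt{\chi_0}\le\kappa_0/4$. Their targets are $F_0\pm t\chi_0/(1\pm t\ip{r}{d})$ with denominators in $[3/4,5/4]$, so they are separated by more than $2\varepsilon$. The fidelity is $\abs{1-t^2\chi_0}^2/(1+t^2\chi_0)^2$, which gives $-\log f\le c\,t^2\chi_0$. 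Here $t^2\chi_0\le1/16$ lets $c$ be absorbed into the constant $10$ in place of $9$. The Helstrom inequality \eqref{eq:helstrom} then completes the bound.

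The main obstacle is the upper-bound bookkeeping. The centering vector $z$ is defined by the unknown $F(x)$ but must be fixed in advance, and in the complex case both quadratures have to be combined. The route is to use the fixed $F_0$ together with the uniform bound $\abs{F-F_0}\le\kappa_0\sqrt{\chi_0}/3$, and then push the resulting constant factors through the Chebyshev argument.
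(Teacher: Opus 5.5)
Your uniform controls and your upper bound are the paper's. The projected vector $z-\ip{v}{z}v$ equals $w-\ip{v}{w}v$, which is exactly the paper's $z=w-\overline{c}\,v$ with $c=\ip{w}{v}$. So $F_0$ drops out of the settings, the constant you add back is $c$, and $\E[C_R^2]=\E[C_I^2]=\kappa^4\chi/4$ holds exactly rather than up to a centering constant. To close the upper bound, use per-coordinate tolerance $\varepsilon/\sqrt2$ for the coordinatewise medians together with $\kappa_0^{-2}\le\chi_0/(144\varepsilon^2)$. The block failure is then below $38\chi_0/(n\varepsilon^2)<1/3$ at $n=\lceil256\chi_0/\varepsilon^2\rceil$.

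The lower bound, as you set it up, does not give the stated constant $1/10$. Put $s:=t^2\chi_0$. Your fidelity $f=\bigl((1-s)/(1+s)\bigr)^2$ is correct, but then $-\log f=2\log\frac{1+s}{1-s}\geq4s$. So the constant in $-\log f\le c\,t^2\chi_0$ is at least $4$, not $1+O(t^2\chi_0)$. With $t=3\varepsilon/\chi_0$ this means $-\log f\geq36\varepsilon^2/\chi_0$, and Helstrom yields only about $\frac{\chi_0}{38\varepsilon^2}\log\frac{1}{4\beta(1-\beta)}$. The $\Theta$ rate survives, but the displayed inequality is not proved.

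The factor $4$ arises because your two alternatives move in opposite directions, so $y_+-y_-=2td$. In Theorem~\ref{thm:main}, the disjointly supported $d_\pm$ give $y_+-y_-=td$ and hence $-\log f\le t^2\chi_0$. You imported that bound into a geometry where it does not hold.

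The fix is to halve the step to $t=3\varepsilon/(2\chi_0)$. This is the paper's pair $x_\pm=(x_0\pm t'u)/\sqrt{1+t'^2}$ with $u=d/\sqrt{\chi_0}$ and $t'=3\varepsilon/(2\sqrt{\chi_0})$. Both alternatives then lie within $\kappa_0/8$ of $x_0$. Also $s\le1/64$, so $-\log f\le4s/(1-s)\le64\varepsilon^2/(7\chi_0)<10\varepsilon^2/\chi_0$. The target gap stays large enough because it is the sum of two displacements. With $\gamma=t\ip{r}{d}$ and $\abs{\gamma}\le1/8$,
\[
 \abs{F(x_+)-F(x_-)}=\frac{2t\chi_0}{\abs{1-\gamma^2}}\geq\frac{3\varepsilon}{1+1/64}>2\varepsilon .
\]
Use this exact difference formula in complex $H$ rather than your argument that the denominators have modulus in $[3/4,5/4]$. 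When $\gamma$ is complex, the two displacements $\pm t\chi_0/(1\pm\gamma)$ are no longer collinear reals, so bounds on moduli alone do not give the separation.
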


\begin{proof}
The proof has four parts. We first construct fixed rank-two settings and compute their moments. We then obtain a pointwise ratio bound, make it uniform on the projective neighborhood, and construct a close pair of alternatives for the converse.

\paragraph{Fixed settings.}
Use the reference direction $v$ to supply the denominator and subtract the reference ratio $c=\ip{w}{v}$ from the target. Define $z=w-\overline c v$, so $z\perp v$. The real and imaginary parts of the remaining overlap are measured by the two bounded Hermitian observables
\begin{equation}\label{eq:general-ratio-settings}
 \begin{split}
 D&=\ket v\bra v,\\
 C_R&=(\ket v\bra z+\ket z\bra v)/2,\\
 C_I&=(\ket v\bra z-\ket z\bra v)/(2i).
 \end{split}
\end{equation}
Omit $C_I$ in the real case. The operator $D$ has outcomes $0,1$, while $C_R,C_I$ have outcomes $0,\pm\norm{z}/2$. With $\kappa=\abs{\ip{v}{x}}$ and $\Delta=F(x)-c$, direct calculation gives
\begin{equation}\label{eq:general-ratio-moments}
 \begin{split}
 \E[D]&=\kappa^2,\qquad \E[C_R]+i\E[C_I]=\kappa^2\Delta,\\
 \chi&=(\norm{z}^2+\abs{\Delta}^2)/\kappa^2,\\
 \E[C_R^2]&=\E[C_I^2]=\kappa^4\chi/4.
 \end{split}
\end{equation}
Measure each setting on an independent $n$-copy batch and return $T_n=c+(\overline C_R+i\overline C_I)/\overline D$, with fallback $c$ when $\overline D=0$.

\paragraph{Pointwise bound.}

Let $G=\{\overline D\geq\kappa^2/2\}$ and $V=\overline C_R+i\overline C_I-\Delta\overline D$. Independence, $\E[V]=0$, and~\eqref{eq:general-ratio-moments} imply
\[
 \begin{aligned}
 \PP(G^c)&\leq\frac4{n\kappa^2},\\
 \E[\abs{V}^2]&\leq\frac{\kappa^4\chi}{2n}
     +\frac{\abs{\Delta}^2\kappa^2}{n}
 \leq\frac{3\kappa^4\chi}{2n}.
 \end{aligned}
\]
Given that $T_n-F=V/\overline D$ on $G$, Markov's inequality gives:
\begin{equation}\label{eq:general-ratio-failure}
 \PP[\abs{T_n-F}>\eta]
 \leq\frac4{n\kappa^2}+\frac{6\chi}{n\eta^2}.
\end{equation}

\paragraph{Uniformity on the ball.}
For uniformity choose a phase-aligned representative with $h=\norm{x-x_0}\leq\kappa_0/4$, and put $F_0=F(x_0)$, $r_0=w-\overline{F_0}v$. As $\ip{r_0}{x_0}=0$ and $\norm{r_0}=\kappa_0\sqrt{\chi_0}$,
\begin{equation}\label{eq:general-ratio-uniform}
 \begin{split}
 \kappa&\geq3\kappa_0/4,\\
 \abs{F(x)-F_0}
 &=\frac{\abs{\ip{r_0}{x-x_0}}}{\abs{\ip{v}{x}}}
 \leq\frac43\sqrt{\chi_0}\,h,\\
 \sqrt{\chi(x)}
 &\leq\frac{\norm{r_0}+\abs{F(x)-F_0}}{3\kappa_0/4}
 \leq\frac{16}{9}\sqrt{\chi_0}.
 \end{split}
\end{equation}
The accuracy restriction gives $\kappa_0^{-2}\leq\chi_0/(144\varepsilon^2)$. Thus~\eqref{eq:general-ratio-failure} with $\eta=\varepsilon/\sqrt2$ is at most $38\chi_0/(n\varepsilon^2)<1/3$ for $n=\lceil256\chi_0/\varepsilon^2\rceil$. Take coordinatewise medians of $m_{\rm rat}(\beta)$ independent complex blocks, or the ordinary median for real blocks. Each coordinate fails its $\varepsilon/\sqrt2$ tolerance with probability at most $e^{-m_{\rm rat}(\beta)/18}$ by Hoeffding's inequality~\cite{hoeffding}. A union bound gives complex absolute error at most $\varepsilon$ except with probability $2e^{-m_{\rm rat}(\beta)/18}\leq\beta$.

\paragraph{A close hard pair.}
For the lower bound, set $u=r_0/\norm{r_0}$, $t=3\varepsilon/(2\sqrt{\chi_0})$, and $x_\pm=(x_0\pm tu)/\sqrt{1+t^2}$. Orthogonality gives $\norm{x_\pm-x_0}\leq t\leq\kappa_0/8$, so both belong to~\eqref{eq:general-ratio-ball}. For $c_1=\ip{v}{u}$, $\abs{c_1}\leq1$, cancellation gives
\begin{equation}\label{eq:general-ratio-gap}
 \begin{split}
 F(x_+)-F(x_-)
 &=\frac{2t\norm{r_0}\kappa_0}{\kappa_0^2-t^2c_1^2},\\
 \abs{F(x_+)-F(x_-)}
 &\geq\frac{3\varepsilon}{1+1/64}>2\varepsilon.
 \end{split}
\end{equation}
And
\begin{equation}\label{eq:general-ratio-fidelity}
 \begin{split}
 f&=\abs{\ip{x_+}{x_-}}^2
   =\left(\frac{1-t^2}{1+t^2}\right)^2,\\
 -\log f&\leq\frac{4t^2}{1-t^2}
 \leq\frac{64\varepsilon^2}{7\chi_0}
 <\frac{10\varepsilon^2}{\chi_0}.
 \end{split}
\end{equation}
Because the two target values differ by more than $2\varepsilon$, an estimator accurate to $\varepsilon$ on both alternatives gives a binary test by assigning its output to the nearer target value. Padding unused inputs turns any protocol capped at $K$ copies into a POVM on $K$ copies. The two pure states then have squared fidelity $f^K$, so Helstrom's minimum average testing error is $(1-\sqrt{1-f^K})/2$. Therefore, for a probability of failure $\beta$ gives :
\[f^K\leq4\beta(1-\beta)\] which gives the stated lower bound. For $\beta\leq1/3$ its logarithm is comparable to $\log(1/\beta)$, and $\chi_0/\varepsilon^2\geq144$ absorbs the constants.
\end{proof}

\subsection{Why finite measure is required}\label{app:infinite}

The finite-measure assumption prevents diffuse mass from having negligible
$L^2$ norm but finite $L^1$ mass. On $[0,\infty)$ with Lebesgue measure,
let
\[
 q_0(x)=e^{-x},\qquad
 \phi(x)=e^{-1}\one_{[0,1]}(x)-\one_{[1,2]}(x).
\]
Then $Z_0=1$, $S_0=1/2$, $\mu_0=0$, and the formal susceptibility is
$\chi_0=(1+e^{-2})/2<\infty$. Fix a positive state-norm radius $r_0$.
Choose $\delta>0$ so that $q_\delta=q_0+\delta\one_{[0,1]}$ has normalized
state within $r_0/2$ of $x_0$. Its target is
$\mu_\delta=\delta/[e(1+\delta)]>0$. Now set
\[
 q_{\delta,L}=q_\delta+\frac{1+\delta}{L}\one_{[3,3+L]}.
\]
The added term has $L^2$ norm $(1+\delta)/\sqrt L$, so the normalized
states converge to $x_{q_\delta}$, while the target denominator doubles and
$F(q_{\delta,L})=\mu_\delta/2$. For fixed
$0<\varepsilon<\mu_\delta/4$, the target gap exceeds $2\varepsilon$ while
the fidelity tends to one. By Helstrom's bound, it excludes every finite
uniform copy cap on the radius-$r_0$ neighborhood for failure below $1/2$.

\end{document}